\theoremstyle{plain}
   \newtheorem{theorem}{Theorem}[section]
   \newtheorem{proposition}[theorem]{Proposition}
   \newtheorem{lemma}[theorem]{Lemma}
   \newtheorem{corollary}[theorem]{Corollary}
   \theoremstyle{definition}
   \newtheorem{definition}[theorem]{Definition}
   \newtheorem{example}[theorem]{Example}
   \theoremstyle{remark}
   \newtheorem{remark}[theorem]{Remark}
\newcommand{\Jac}{\operatorname{Jac}}
\newcommand{\Pic}[1]{\operatorname{Pic}{#1}}
\newcommand{\Picu}[1]{\operatorname{\underline{Pic}}{#1}}
\newcommand{\Picf}[1]{\operatorname{Pic}^0{#1}}
\newcommand{\PD}{\mathscr {P}_d}
\newcommand{\OPD}{\overline{\mathscr {P}_d}}
\newcommand{\LL}{\overline{\mathscr {L}}}
\newcommand{\RR}{{\mathbb R}}
\newcommand{\ZZ}{{\mathbb Z}}
\newcommand{\ZTWO}{{\mathbb Z}/2{\mathbb Z}}
\newcommand{\CC}{{\mathbb C}}
\newcommand{\PP}{{\mathbb P}}
\newcommand{\NN}{{\mathbb N}}
\newcommand{\bb}[1]{\mathbb{#1}}
\newcommand\calo{\mathcal O}
\newcommand{\mc}[1]{\mathcal{#1}}
\newcommand{\mcr}[1]{\mathscr{#1}}
\newcommand\op[1]{\operatorname{#1}}
\newcommand{\pp}{\frac{\partial \overline{\partial}}{2\pi {\rm i}} }
\newcommand{\ddiv}[1]{\operatorname{div}(#1)}
\begin{document}

\title[]
{K\"ahler quantization of vortex moduli}
\author{Dennis Eriksson}
\address{Matematiska Vetenskaper, Chalmers Tekniska H\"ogskola \& G\"oteborgs Universitet, 41296 G\"oteborg, Sweden}
\email{dener@chalmers.se}
\author{Nuno M. Rom\~ao}
\address{Institut f\"ur Mathematik, Universit\"at Augsburg, 86135 Augsburg, Germany}
\email{Nuno.Romao@math.uni-augsburg.de}
\date{\today}
\begin{abstract}
We discuss the K\"ahler quantization of moduli spaces of vortices in line bundles over compact surfaces $\Sigma$. This furnishes a semiclassical framework for the study of quantum  vortex dynamics in the Schr\"odinger--Chern--Simons model. We employ Deligne's approach to Quillen's metric in determinants of cohomology to construct all the quantum Hilbert spaces in this context. An alternative description of the quantum wavesections, in terms of multiparticle states of
spinors on $\Sigma$ itself (valued in a prequantization of a multiple of its area form), is also obtained. This viewpoint sheds light on the nature of the quantum solitonic particles that emerge from the gauge theory. We find that in some cases (where the area of $\Sigma$ is small enough in relation to its genus) the dimensions of the quantum Hilbert spaces may be sensitive to the input data required by the quantization scheme,  and also address the issue of relating different choices of such data geometrically.
\end{abstract}
\maketitle
\setcounter{tocdepth}{1}

\tableofcontents

\section{Introduction}

This article is concerned with an application of geometric quantization to a gauge-theoretic setting, in a similar spirit to the study of moduli of flat connections in relation to
Chern--Simons and conformal field theory~\cite{Hit,AxdPiWi}. As phase spaces, we consider moduli spaces of gauged vortices on a compact Riemann surface $\Sigma$. Thus we go beyond the
setting of pure gauge theory and incorporate the coupling of the gauge field $A$ (a connection in a principal $G$-bundle $P\rightarrow \Sigma$) to a matter field $\phi:\Sigma \rightarrow P\times_G X$ (a section of an associated bundle with typical K\"ahler fibre $X$). We shall simplify the problem drastically by restricting ourselves to Abelian gauge theory and line bundles, setting $G={\rm U}(1)$ and $X=\CC$.
One of our goals is to highlight how this problem draws in some novel aspects that can be perceived as orthogonal to the mainstream research on quantization of the moduli of flat connections.

In principle, a study of geometric quantization like the one we propose to take on is expected to provide at least two outputs. The first one is a description of the quantum Hilbert spaces that result from
the quantization, starting from classical data that may need to be supplemented by extra structures, required as ancillary ingredients in the construction (such as Bohm--Aharonov or statistical phases, polarisations, and metaplectic structures).
The second one is a framework to describe (geometrically, if possible) how the quantum Hilbert space depends on the choice of such extra structures. In this paper, we cater for both of these expected outputs, in the following sense:
\begin{itemize}
\item[(i)] We give a description of the quantum Hilbert spaces as vector spaces directly associated to the algebraic geometry of $\Sigma$, admitting a clean interpretation as spaces of multiparticle states in standard nonrelativistic quantum mechanics; see Theorem~\ref{thmdims}. The resulting picture, as we shall see, has the attractive feature of being consistent with the spin-statistics theorem of quantum mechanics.
\item[(ii)] We show how the various choices of extra structures can be fitted into a geometric family (see equation (\ref{bundleonjac})), in the sense of moduli. In addition, we will argue in Theorem~\ref{noHit} that the main route used to show independence of choices in the geometric quantization of moduli of flat connections, by means of a projectively flat connection over the space of choices (also known as Hitchin's connection), is unsuitable for the general quantization problem at hand.
\end{itemize}

To be more precise, the classical phase space we deal with is the moduli space $(\mathcal{M}_d, \omega_{L^2})$ of solutions of the vortex equations in a Hermitian line
bundle $L\rightarrow \Sigma$ of degree $d$, equipped with the symplectic form
associated to its K\"ahler $L^2$-geometry. The definition of $\omega_{L^2}$ involves the Hermitian metric in $L\rightarrow \Sigma$, a Riemannian metric on $\Sigma$, and a real parameter $\tau$; we refer the reader to Section~\ref{sec:vortices} for background. Under the assumption $\tau \in\,  ] \frac{4\pi d}{{\rm Vol(\Sigma)}},\infty[$, where ${\rm Vol}(\Sigma):=\int_\Sigma \omega_\Sigma$ and $\omega_\Sigma$ is the area form, the moduli space
$\mathcal{M}_d$ is the symmetric product
\begin{equation} \label{modspaces}
\mathcal{M}_d  \cong S^d \Sigma:=\Sigma^d / \mathfrak{S}_d
\end{equation}
with its natural complex structure $J^{j_\Sigma}$, induced from the complex structure $j_\Sigma$ of $\Sigma$. However, the symplectic structure $\omega_{L^2}$ (or equivalently, the associated K\"ahler metric $g_{L^2}$) is much harder to
describe. In the limit $\tau \rightarrow \frac{4\pi d}{{\rm Vol(\Sigma)}} $ (referred to as the regime of dissolving vortices~\cite{ManRom}, corresponding to an asymptotically vanishing section $\phi$ and a connection ${\rm d}_A$ of constant curvature with respect to $\omega_\Sigma$), the
complex structure asymptotically determines $\omega_{L^2}$ as the K\"ahler form of a generalisation of the Bergman geometry~\cite{JosCRS} on $\Sigma$; but in general we must start from classical data that are not explicitly accessible, in contrast with the problem of quantization of moduli of flat connections. We shall show how, for most of our purposes, this difficulty can be surmounted in a satisfactory way through a description of the $L^2$-geometry by means of Quillen's metrics on determinants of cohomology, extending a discussion that was initiated by~\cite{Dey} borrowing inspiration from~\cite{BisRag}.
 
The most natural polarisations to consider in the geometric quantization of a K\"ahler phase space are the complex polarisations compatible with the K\"ahler form. In our context, there are both (a) a natural 
family of such polarisations on $S^d\Sigma$, namely, those determined
by compatible complex structures $J^{\tilde\jmath_\Sigma}$ induced from
complex structures $\tilde\jmath_{\Sigma}$ on the surface
$\Sigma$; and (b) a preferred polarisation in this family, namely,
the one corresponding to the $J^{j_\Sigma}$ determined by the particular complex structure $j_\Sigma$ featuring in one of the vortex equations (see (\ref{vort1}) and (\ref{delbarA}) below), and which can therefore be regarded as part of the classical data. Once one takes heed of this preferred choice of complex structure $J^{j_\Sigma}$, the discussion of dependence of the quantization on complex polarisations is secondary, and perhaps even pointless. What is physically more relevant is understanding the dependence on other type of input, such as metaplectic data that must be fed into the quantization process. As we shall see, a natural parametrisation of the relevant metaplectic corrections is absorbed by the other choice required in our construction: the prequantization of a particular rescaling of the area form $\omega_\Sigma$.

Before we lay out the contents of this paper, we want to provide a broader panorama of the gauge theory context in which our problem naturally fits. The vortex equations (\ref{vort1})--(\ref{vort2}) that we will work with are self-duality equations for a Ginzburg--Landau energy functional of the type
\begin{equation}\label{GinzLand}
{\sf E}_\xi (A,\phi)= \frac{1}{2}\int_\Sigma \left( \left|F_A\right|^2 + \left|{\rm d}_A \phi \right|^2 + \frac{\xi}{4} \left( |\phi|^2 -\tau\right)^2 \right) 
\end{equation}
at critical coupling $\xi \rightarrow  1$, and describe minima of this functional in section homotopy classes associated to degrees $d>0$. There are several field theory
models for vortex dynamics incorporating this functional as a potential energy, the most familiar being the Abelian Higgs model in 1+2 dimensions~\cite{JafTau}. This is a Lorentzian
gauged sigma-model, and for slow velocities its dynamics has been shown to be approximated by the
geodesic flow of the $L^2$-metric on $\mathcal{M}_d$ (see \cite{StuAHM} for the analysis of the case $\Sigma=\CC$ and $d=2$), following an idea sketched by Manton~\cite{ManSutTS, StuAL}. Another model of vortex dynamics, including a linear combination of the Chern--Simons functional with a gauged Schr\"odinger term as kinetic energy, was introduced in~\cite{ManFVD} and studied further in~\cite{RomSpe, KruSut}. Its kinetic term is closely related to the Chern--Simons--Dirac functional used by Kronheimer and Mrowka in the study of Seiberg--Witten theory on 3-manifolds~\cite{KroMro}. For values of the coupling constant $\xi$ close, but not equal to $1$, the functional (\ref{GinzLand}) induces~\cite{RomPhD} a Hamiltonian dynamical system on the moduli space $\mathcal{M}_d$, where now the $L^2$-geometry supplies the symplectic structure $\omega_{L^2}$ (rather than the Riemannian structure $g_{L^2}$ for canonical dynamics in ${\rm T}^*\mathcal{M}_d$ as in the Abelian Higgs model); the hope~\cite{ManFVD,RomPhD} that this Hamiltonian system should approximate slow Schr\"odinger--Chern--Simons vortex dynamics has been substantiated in reference~\cite{DemStu}. For
applications of this model in condensed matter physics, see e.g.~\cite{TonTur}.

The geometric quantization of $(\mathcal{M}_d, \omega_{L^2})$ is to be interpreted as a semiclassical approximation (involving a truncation to low energies) of the quantum Hilbert space associated to the
quantum version of the Schr\"odinger--Chern--Simons model of~\cite{ManFVD}; this viewpoint was discussed
in~\cite{RomQVS} in the case $\Sigma=S^2$. From a functorial field theory perspective, the underlying quantization is dealing strictly with product spacetimes of the form $I\times \Sigma$, where $I\subset \RR$ is a time interval; but it could in principle be extended to more general bordisms, taking advantage of the interpretation of Seiberg--Witten moduli spaces as canonical relations~\cite{BatWeiGQ} between vortex moduli spaces (see \cite{DonMT, Ngu}). Upgrading the quantization produced in this paper to a quantum field theory in Lagrangian formulation would require the construction of an Atiyah--Segal functor on (a suitable class of) Riemannian bordisms with necks, so as to satisfy the Atiyah--Segal axioms~\cite{Ati,Seg}. At the moment, this remains a considerable challenge.

Let us now briefly summarise the organisation of this paper. In Section~\ref{sec:vortices} we provide information about the moduli spaces of vortices in line bundles on a compact surface, seen as a family of K\"ahler manifolds. Section~\ref{sec:geomquant} is a short review of K\"ahler quantization with the purpose of fixing our basic terminology.
In Section~\ref{sec:univ}, we describe how the $L^2$-geometry on the spaces ${\mathcal{M}}_d$ is captured by fibre integration formulas; ultimately, this viewpoint justifies the relevance of Quillen's metrics to our context. Section~\ref{sec:Quil} is a summary of definitions and basic techniques related to determinants of cohomology in families of curves and the associated natural metrics; our perspective focuses on the pairing of line bundles introduced by Deligne~\cite{determinant}, and is geared towards the application in this article. Using these tools, we construct all K\"ahler quantizations of the $L^2$-geometry of $\mathcal{M}_d$ in Section~\ref{sec:qdata}, formalising
the input data required in terms of the geometry given on the surface $\Sigma$  at the classical level. Section~\ref{sec:dims} furnishes a description of the quantum wavesections in terms of objects defined on the surface $\Sigma$; the
main outcome is that vortices in line bundles of degree $d$ quantize as states of $d$ fermionic particles on the surface --- each individual fermion being a spinor valued in a prequantisation of $\frac{\tau}{2} \omega_\Sigma$, where the real parameter $\tau$ defines the vacuum of the Higgs field $\phi$ in the field theory model. Finally,
in Section~\ref{sec:projflat} we address the problem of relating different quantizations in our scheme, showing that the main tool used in the
K\"ahler quantization of moduli spaces of flat connections does not
apply to our problem in nontrivial cases of genus $g>1$.\\

\noindent
{\bf Acknowledgements:}
{This project was started as part of the activities of a Junior Trimester Program on ``Mathematical Physics'' hosted at the Hausdorff Research
Institute for Mathematics (HIM), University of Bonn, in 2012. We would like to thank HIM for hospitality, as well as Marcel B\"okstedt (Aarhus), Kai Cieliebak (Augsburg), Daniel Huybrechts (Bonn) and Nick Manton (Cambridge) for useful comments.}

\section{Vortices in line bundles} \label{sec:vortices}

Throughout, $\Sigma$ will always denote a compact orientable surface of genus $g$.
We recall how the symmetric products $S^d \Sigma$ in (\ref{modspaces}) realise moduli spaces for the vortex equations in Hermitian line bundles $L\rightarrow \Sigma$ of degree $d$, and how to portray their K\"ahler $L^2$-geometry.

\subsection{The vortex equations}

We fix a K\"ahler structure $(\Sigma, j_\Sigma, \omega_\Sigma)$ on the surface $\Sigma$. The Hodge star-operator of the underlying metric
 $g_\Sigma:= \omega_\Sigma(\cdot, j_\Sigma \cdot)$, acting on differential forms, will be denoted by $\ast$.
 
Let $L \rightarrow \Sigma$ be
a complex line bundle of degree $d>0$, endowed with a Hermitian metric $\langle \cdot, \cdot\rangle$ which we take to be $\mathbb{C}$-linear in the first argument. The notation we shall follow throughout is that $\overline{V}\rightarrow M$ stands for a vector bundle $V\rightarrow M$
which has been given a Hermitian structure, so we could as well write
$\overline{L} \rightarrow \Sigma$ for emphasis. Sometimes, our vector bundles will also carry holomorphic structures, but these will not be
written explicitly.

A moment map   $ \mathbb{C} \rightarrow \mathfrak{u}(1)^{*}\cong  \mathbb{R}$ 
for the usual Hamiltonian ${\rm U}(1)$-action on $\mathbb{C}$ is prescribed by choosing a real constant $\tau \in \RR$, and it gives rise to a global map $L\rightarrow \mathbb{R}$ pulling back to $\Sigma$ as
\begin{equation}\label{momentm}
\mu \circ \phi=\frac{1}{2}(\langle \phi,\phi \rangle -\tau), \qquad \phi \in\Gamma:= \Gamma(\Sigma,L).
\end{equation}

\begin{definition}
Given the data $(j_\Sigma,\omega_\Sigma,\langle\cdot,\cdot\rangle, \tau)$,  {\em vortices} in $\overline{L}\rightarrow \Sigma$ are pairs $({\rm d}_A, \phi)$ consisting of a
smooth unitary connection ${\rm d}_A$
in  $\overline{L} \rightarrow \Sigma$ with curvature $F_A\in \Omega^2(\Sigma; \mathbb{R})$, and a smooth section $\phi\in \Gamma(\Sigma, L)$, satisfying the {\em vortex equations}~\cite{Bog}
\begin{eqnarray}
&\bar\partial_{A} \phi  = 0, & \label{vort1} \\
&F_{A} + (\mu \circ \phi)\, \omega_{\Sigma} =0. & \label{vort2}
\end{eqnarray}
\end{definition}

The differential operator 
\begin{equation} \label{delbarA}
{\bar \partial}_A:= \frac{1}{2}\left({\rm d}_A+{\rm i}\, {\rm d}_A\circ j_\Sigma\right),
\end{equation}
which satisfies $({\bar \partial}_A)^2=0$,
is the holomorphic structure on $L\rightarrow \Sigma$ determined by ${\rm d}_A$ and the complex structure $j_\Sigma$ on the base~\cite{DonKro}.
Both equations (\ref{vort1}) and (\ref{vort2}) are invariant under the group
of unitary gauge transformations ${\rm Aut}_\Sigma(L)\cong C^\infty(\Sigma,{\rm U}(1))$, acting by
\begin{equation}\label{gauge}
({\rm d}_A,\phi) \mapsto ({\rm d}_A - {\rm i} u^{-1}{\rm d}u,u\phi), \quad
u \in {\rm Aut}_\Sigma(L),
\end{equation}
and one is usually interested in solutions only up to this action. Infinitesimal gauge transformations are described by the induced action of the infinite-dimensional Abelian Lie algebra
$$
{\rm Lie} ({\rm Aut}_\Sigma(L)) \cong C^\infty(\Sigma,\mathfrak{u}(1)) \cong C^\infty (\Sigma,\RR).
$$

Equation (\ref{vort1}) expresses that $\phi$ is a holomorphic section with respect to
the holomorphic structure $\bar\partial_A$.
Therefore, to each solution $({\rm d}_A,\phi)$ we can associate an effective
divisor $(\phi)\in {\rm Div}^+ (\Sigma)$ of
zeroes of $\phi$,
whose degree coincides with the first Chern number
\[
d=  \frac{1}{2 \pi}\int_{\Sigma} F_A = c_1(L)[\Sigma] = {\rm deg}\, L. \]
This integer can be thought of as a quantized magnetic flux in units of $2\pi$ (i.e.\ the total number of vortices in the field configuration), while
the zeroes of $\phi$ specify precise locations for $d$ individual {\em vortex cores} on $\Sigma$, counted with multiplicity. For emphasis, we will
sometimes speak of a {\em $d$-vortex}, and of a {\em multivortex} if $d>1$.

From integrating (\ref{vort2}), it follows that
\begin{equation}\label{L2norm}
|\!|\phi|\!|^2_{L^2}:=\int_{\Sigma} \langle\phi,\phi \rangle \omega_{\Sigma}= \tau {\rm Vol}(\Sigma) - 4 \pi d,
\end{equation}
thus we learn that $(\Sigma,\omega_\Sigma)$ can only
support a $d$-vortex provided
\begin{equation} \label{Brad}
\tau {\rm Vol}(\Sigma) \ge 4 \pi d.
\end{equation}

\subsection{Vortex moduli spaces} \label{sec:moduli}
Conversely, one can prove the following result~\cite{NogYMH, BraVHLB, GarVRS, GarICV}:

\begin{theorem} \label{moduliSd}
Consider a  line bundle $\overline{L} \rightarrow \Sigma$ of degree $d$, equipped with the geometric data introduced above.
Assume that the strict inequality
\begin{equation}\label{Bradlow}
\tau {\rm Vol}(\Sigma) > 4 \pi d
\end{equation}
is satisfied. Then given any effective divisor $D$ of degree $d$ on $\Sigma$, 
one can construct a solution $({\rm d}_A,\phi)$ to equations (\ref{vort1}) and (\ref{vort2}) such that $(\phi)=D$,
and this solution is unique up to the gauge action (\ref{gauge}).
\end{theorem}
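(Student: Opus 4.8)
The plan is to convert the coupled system (\ref{vort1})--(\ref{vort2}) into a single scalar semilinear elliptic equation by exploiting the action of the complexified gauge group $\mathcal{G}^{\CC}\cong C^\infty(\Sigma,\CC^{\ast})$ on holomorphic structures and sections, of which the unitary group ${\rm Aut}_\Sigma(L)\cong C^\infty(\Sigma,{\rm U}(1))$ of (\ref{gauge}) is the maximal compact subgroup. First I would fix the holomorphic data. Since $D$ is effective of degree $d$, choose a holomorphic structure on $\overline{L}$ realising $L\cong\mathcal{O}_\Sigma(D)$ and let $s_0$ be the corresponding section with zero divisor $(s_0)=D$; take ${\rm d}_{A_0}$ to be the Chern connection of the fixed metric $\langle\cdot,\cdot\rangle$ for this structure, so that $\bar\partial_{A_0}s_0=0$. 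Because $\mathcal{G}^{\CC}$ acts transitively on holomorphic structures with a given isomorphism type while preserving holomorphicity of the transformed section, equation (\ref{vort1}) holds by construction along the whole orbit, and the entire problem is transferred to the real equation (\ref{vort2}), to be solved by selecting the \emph{radial} direction of the orbit, parametrised by a function $u\in C^\infty(\Sigma,\RR)$.

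With this parametrisation one has $\langle\phi,\phi\rangle=e^{u}\langle s_0,s_0\rangle$ and $F_A=F_{A_0}-\mathrm{i}\,\partial\bar\partial u$ (up to normalisation, with the sign convention that renders the reduced functional convex), so contracting (\ref{vort2}) with $\omega_\Sigma$ yields a Kazdan--Warner type equation of the schematic form
\[
\Delta_{g_\Sigma} u \;=\; e^{u}\langle s_0,s_0\rangle-\tau+\rho_0,
\]
where $\Delta_{g_\Sigma}$ is the Laplacian of $g_\Sigma$ (normalised so that $\int_\Sigma\Delta_{g_\Sigma}u\,\omega_\Sigma=0$) and $\rho_0:=2\Lambda F_{A_0}$ is a fixed function with $\int_\Sigma\rho_0\,\omega_\Sigma=4\pi d$; choosing ${\rm d}_{A_0}$ of constant central curvature makes $\rho_0\equiv 4\pi d/{\rm Vol}(\Sigma)$. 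Integrating over $\Sigma$ recovers exactly the balance (\ref{L2norm}), namely $\int_\Sigma e^{u}\langle s_0,s_0\rangle\,\omega_\Sigma=\tau\,{\rm Vol}(\Sigma)-4\pi d$. As the left-hand side is strictly positive ($s_0\not\equiv 0$), the strict inequality (\ref{Bradlow}) is precisely the necessary integral constraint for solvability, and the content of the theorem is that it is also sufficient.

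For existence and uniqueness together I would use the variational method. The displayed equation is the Euler--Lagrange equation of the functional
\[
J(u)=\int_\Sigma\left(\tfrac12\,|\nabla u|^2_{g_\Sigma}+e^{u}\langle s_0,s_0\rangle+(\rho_0-\tau)\,u\right)\omega_\Sigma
\]
on $W^{1,2}(\Sigma)$, which is \emph{strictly convex}. Writing $u=\bar u+u_0$ with $\bar u$ the mean and $u_0$ of zero mean, the gradient term controls $u_0$ via the Poincar\'e inequality; as $\bar u\to+\infty$ the exponential term forces $J\to+\infty$, while as $\bar u\to-\infty$ the linear term contributes $(4\pi d-\tau\,{\rm Vol}(\Sigma))\,\bar u$, which tends to $+\infty$ \emph{exactly because} the strict inequality (\ref{Bradlow}) makes its coefficient negative. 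Hence $J$ is coercive, the direct method produces a minimiser (the borderline two-dimensional Sobolev behaviour being handled by the Moser--Trudinger inequality), strict convexity makes it unique, and elliptic regularity shows the minimiser is smooth. This existence step is the one I expect to be the main obstacle: securing coercivity is where (\ref{Bradlow}) passes from necessary to sufficient, and a secondary technical point is the controlled degeneration of $\langle s_0,s_0\rangle$ near the support of $D$, monitored through the local vortex-core model. (Equivalently, one may run the method of sub- and supersolutions, a large constant furnishing an admissible supersolution once $\tau-\rho_0>0$.)

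It remains to read the result back in gauge-theoretic terms. A unique $u$ determines a unique point of the $\mathcal{G}^{\CC}$-orbit of $({\rm d}_{A_0},s_0)$, hence a connection ${\rm d}_A$ and a section $\phi$ solving (\ref{vort1})--(\ref{vort2}) with $(\phi)=D$. Any other solution with the same divisor has its holomorphic pair $(\bar\partial_A,\phi)$ isomorphic to $(\mathcal{O}_\Sigma(D),s_0)$, hence lies in the same $\mathcal{G}^{\CC}$-orbit; the uniqueness of $u$ then forces the two solutions into a single ${\rm Aut}_\Sigma(L)$-orbit, which is precisely the residual freedom (\ref{gauge}). This establishes existence and uniqueness up to unitary gauge.
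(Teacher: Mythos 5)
The paper does not prove Theorem~\ref{moduliSd} itself but cites it from the literature (Noguchi, Bradlow, Garc\'\i a-Prada), and your argument is precisely the standard one in those references: reduction via the complexified gauge group to a Kazdan--Warner equation, with (\ref{Bradlow}) supplying both the integral obstruction and the coercivity of the strictly convex functional, and uniqueness up to unitary gauge following from uniqueness of the minimiser. Your outline is correct and matches the intended proof.
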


Thus once the assumption (\ref{Bradlow}) is made, which we shall do from now on, there is a moduli space $\mathcal{M}_d$ of $d$-vortices up to gauge transformations,
and it can be identified with the symmetric product (\ref{modspaces}) ---
which plays a prominent role in classical algebraic geometry of curves as the space of effective divisors of degree $d$ (see~\cite{arabello}).

A well-known fact from two-dimensional topology (see e.g.~\cite[p.~18]{arabello}) is that the quotient on the right-hand side of (\ref{modspaces}) (where the symmetric group $\mathfrak{S}_d$ acts by permuting the $d$ copies of $\Sigma$)
is smooth, even though the $\mathfrak{S}_d$-action is not free. 

It can be verified~\cite{SamVS,StuAHM} that
the complex structure $J^{j_\Sigma}$ on $S^d \Sigma$ induced by $j_\Sigma$ (see \cite{arabello})
coincides with a natural complex structure on $\mathcal{M}_d$ induced by the (almost) complex structure described as
\begin{equation}\label{cxstr}
J: (\dot A, \dot \phi) \mapsto (*\dot A,{\rm i}\dot\phi),
\end{equation}
on each tangent space ${\rm T}_{({\rm d}_A,\phi)}(\mathcal{A}\times \Gamma)$ of the space of all pairs (or ``fields'')  $ ({\rm d}_A,\phi)$. Here, each tangent space is interpreted as an affine space modelled on the vector space ${\Omega}^1(\Sigma;\RR)\times\Gamma(\Sigma,L) \ni (\dot A,\dot \phi)$. 
The fact that the complex structure (\ref{cxstr}) descends to $\mathcal{M}_d$ follows immediately from the realisation of the moduli space as a K\"ahler quotient, which we explain next.

\subsection{$L^2$-geometry} \label{sec:L2geom}

Given the metric structures on $\Sigma$ and $L$, each infinite-dimensional tangent space  ${\rm T}_{({\rm d_A,\phi})}(\mathcal{A}\times \Gamma)$ is equipped with an $L^2$-inner product
which we normalise in such a way that
\begin{equation}\label{L2metric}
|\!| (\dot A, \dot\phi)|\!|^2_{L^2}:=\frac{1}{4\pi}\int_{\Sigma}\left(\dot{A}\wedge \ast \dot A+\langle \dot{\phi}, \dot{\phi}\rangle \,\omega_{\Sigma}\right).
\end{equation}
This is just a particular case of the $L^2$-inner product on the space of sections of a (tensor product of a) Hermitian vector bundle over a base
endowed with a volume form (coming from a Riemannian structure or a symplectic structure, say), which is familiar from Hodge theory and will be employed in other sections of this paper; its name is meant to emphasise that it provides a generalisation of the usual $L^2$-inner product of functions in real analysis.
Since this inner product is independent of the fields $({\rm d}_A,\phi)$, (\ref{L2metric}) can be regarded as defining a flat Riemannian metric on the infinite-dimensional space
${\mathcal A}\times \Gamma$; it is a K\"ahler metric, as it is (pointwise) compatible with the complex structure (\ref{cxstr}).

To see that this K\"ahler structure induces a K\"ahler structure on the finite-dimensional manifold $\mathcal{M}_d$, one can argue as follows~\cite{GarVRS}. Observe that the first vortex equation (\ref{vort1}) is invariant under the complex structure $(\ref{cxstr})$, and so it defines a K\"ahler submanifold $\mathcal{N}_d\subset
\mathcal{A}\times \Gamma$ on which the gauge group ${\rm Aut}_\Sigma(L)$ acts. Moreover, this action is Hamiltonian, with a moment map for it being
given by the left-hand side of the second vortex equation (\ref{vort2}) --- more precisely, by its image 
under the Hodge star-operator $*$.
We can therefore interpret the moduli space $\mathcal{M}_d$ as an infinite-dimensional
version of a symplectic quotient
\begin{equation} \label{Mdsymplquot}
{\mathcal M}_d =  \mathcal{N}_d /\!\!/ {\rm Aut}_\Sigma(L).
\end{equation}

An alternative interpretation for the quotient (\ref{Mdsymplquot}) is the following. The fact that $\mathcal{N}_d$ is K\"ahler
and that the action of ${\rm Aut}_\Sigma (L)$ is holomorphic and Hamiltonian implies that this action extends uniquely to an action of the complexification
$$
{{\rm Aut}_\Sigma(L)^\CC} \cong {\rm Aut}_\Sigma(L) \times {\rm i}\, {\rm Lie} ({\rm Aut}_\Sigma(L)).
$$
It is then interesting to consider the corresponding space of orbits $\mathcal{N}_d/{\rm Aut}\Sigma(L)^\CC$, but this space is not well behaved; instead, one should replace it by a geometric quotient, retaining only orbits that are {\em stable} in an appropriate sense. In our setting, this means
that one should remove the orbit of the zero-section $\phi=0$, and under this prescription there is an identification of the geometric quotient with the symplectic quotient (\ref{Mdsymplquot}).
Such an identification is sometimes called a {\em Hitchin--Kobayashi correspondence}; it amounts to the existence and uniqueness, inside any orbit of the complexified group through a stable point, of a unique orbit of the original Lie group where the moment map vanishes (see e.g.~\cite{MunHK}).

The quotient (\ref{Mdsymplquot}) receives a symplectic structure which is compatible with the complex structure (\ref{cxstr}); hence the moduli space $\mathcal{M}_d$ is a K\"ahler manifold.
In fact, we obtain in this way a family
$\omega_{L^2, \tau}$ of K\"ahler structures for each $d$, parametrised by $\tau \in \, ]\frac{4 \pi d}{{\rm Vol}(\Sigma)}, \infty[$, reflecting the choice of moment map, but we will henceforth suppress the explicit $\tau$-dependence from our notation and write simply
$\omega_{L^2}$. 

There are other ways of describing the $L^2$-K\"ahler structure on $\mathcal{M}_d(\Sigma)$. One way, which is useful to obtain a localisation formula for the $L^2$-metric~\cite{SamVS}, is based on the
inclusion ${T}_{({\rm d}_A,\phi)} \mathcal{M}_d \subset {\rm T}_{({\rm d}_A,\phi)} \mathcal{A}\times \Gamma$
provided by linearising equations (\ref{vort1})--(\ref{vort2}) about a solution $({\rm d}_A,\phi)$, and resorting to the Coulomb gauge; see \cite{StuAHM}. Another description~\cite{Per},
which gives a more useful vantage point to our
considerations in this paper, uses a universal  bundle over the product $\mathcal{M}_d\times \Sigma$ (the line bundle corresponding to the
universal effective divisor of degree $d$, see~\cite[p~.164]{arabello}); we shall come back to this in Section~\ref{sec:univ}.

\subsection{The K\"ahler classes $[\omega_{L^2}]$} 
\label{sec:Kaehlerclasses}

The following description of the K\"ahler class $[\omega_{L^2}] \in H^2(S^d\Sigma;\mathbb{Z})$ was put forward by Manton and Nasir~\cite{ManNasV} (but see also~\cite{Per}):
\begin{equation} \label{omegacoh}
[\omega_{L^2}]= 2\pi \theta +   \left( \frac{\tau {\rm Vol}(\Sigma)}{2} - 2 \pi d \right) \eta.
\end{equation}
Here, $\theta, \eta \in H^2(S^d\Sigma;\mathbb{Z})$ are integral cohomology classes on the symmetric product spanning the whole K\"ahler cone. To describe these generators, one can resort to the isomorphism
\begin{equation} \label{isoXi}
\Xi:  H_0(\Sigma;\ZZ) \oplus \bigwedge{}^2 H_1(\Sigma;\ZZ)  \stackrel{\cong}{\longrightarrow}  H^2(S^d\Sigma;\ZZ)
\end{equation}
established in Lemma~2.3 of~\cite{Per} for $d>1$. One has $\eta=\Xi([{\rm pt}])$, the image of the obvious generator of the first summand,
and $\theta=\Xi\left(\sum_{j=1}^g a_i\wedge a_{i+g}\right)$, where $\{a_i,a_{i+g}\}_{i=1}^g$ is a symplectic basis for $H_1(\Sigma;\ZZ)$ and $g$ the genus of $\Sigma$. An alternative interpretation
 for $\theta$ is the following (see Proposition~(2.1)(ii) in \cite{BerTha}): it is the pull-back
$\theta=({\rm AJ}_d)^*\Theta$ by the Abel--Jacobi map
${\rm AJ}_d:S^d\Sigma \rightarrow {\rm Jac}(\Sigma)$ of the theta-class $\Theta$ (the latter being the first Chern class of the
line bundle defined by the theta-divisor on the Jacobian ${\rm Jac}(\Sigma)$). It is a well-known fact that 
$
\Theta^g=g!\, {\rm PD}([{\rm pt}])\in H^{2g}({\rm Jac}(\Sigma);\ZZ)
$
(see e.g.~\cite[\S I.5]{arabello}), where $\rm PD$ denotes the map taking the Poincar\'e dual of a cycle, leading to the relation
\begin{equation} \label{thetag}
\theta^g=g! \, ({\rm AJ}_d)^*{\rm PD}([{\rm pt}]) \; \in\; H^{2g}(S^d \Sigma;\ZZ) \quad \text{ whenever }d\ge g
\end{equation}
in the singular cohomology ring of $S^d\Sigma$.

The formula (\ref{omegacoh}) is consistent with an interpretation~\cite{ManRom} of the $L^2$-geometry on the vortex moduli space $\mathcal{M}_d$ as a deformation of the geometry
of line bundles on $(\Sigma,j_\Sigma)$ (encoded by the natural flat K\"ahler metric on the Jacobian) as $\tau$ runs away from the critical value $\frac{4 \pi d}{{\rm Vol}(\Sigma)}$. It also implies that  K\"ahler structures $\omega_{L^2}$ corresponding to different values of $\tau $ (with all the geometric data on $L$ and $\Sigma$ fixed) are not symplectomorphic, and hence also not isometric.

A question that is in a sense complementary to this dependence on $\tau$ becomes relevant in the context of K\"ahler quantization. Suppose that we fix 
a compact oriented surface $\Sigma$ with area form $\omega_\Sigma \in \Omega^2(\Sigma;\RR)$, a Hermitian line bundle $\overline{L}\rightarrow \Sigma$ of degree $d\in \NN$, and  $\tau\in \RR$ such that \eqref{Bradlow} holds. Then, for each choice of
complex structure $j_\Sigma$ compatible with $\omega_\Sigma$, one may define the operator
(\ref{delbarA}), write down the vortex equations (\ref{vort1})--(\ref{vort2}) and consider the corresponding moduli space of vortices, i.e. the K\"ahler manifolds 
\begin{equation} \label{MdKaehler}
\mathcal{M}_d=\left(S^d\Sigma, J^{j_\Sigma}, \omega_{L^2}^{j_\Sigma}\right),
\end{equation}
where we now made the dependence of $\omega_{L^2}$ on $j_\Sigma$
explicit --- recall that $\omega_{L^2}$ is obtained by
symplectic reduction \eqref{Mdsymplquot} of a symplectic space $\mathcal{N}_d$ depending on $j_\Sigma$.
Are the resulting symplectic manifolds $\left(S^d \Sigma, \omega_{L^2}^{j_\Sigma}\right)$ symplectomorphic?

Moser's theorem  yields a positive answer to this question, which we may rephrase in a slightly stronger version:

\begin{lemma} \label{Mosertrick}
Any two complex structures $j_{\Sigma}^0,j_{\Sigma}^1$ on $\Sigma$ compatible with a given area form $\omega_\Sigma \in \Omega^2(\Sigma;\RR)$ give rise to strongly isotopic symplectic structures
$\omega_{L^2}^{j_\Sigma^{0}}, \omega_{L^2}^{j_\Sigma^{1}}$ on the manifold $S^d\Sigma$.
\end{lemma}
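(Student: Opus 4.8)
The plan is to exhibit $\omega_{L^2}^{j_\Sigma^0}$ and $\omega_{L^2}^{j_\Sigma^1}$ as the two ends of a smooth path of symplectic forms sharing a common de Rham class, and then to upgrade this to a strong isotopy by Moser's theorem. To build the path, I would first connect $j_\Sigma^0$ and $j_\Sigma^1$ inside the space $\mathcal{J}(\Sigma,\omega_\Sigma)$ of complex structures on $\Sigma$ compatible with the fixed area form. On an oriented surface every $\omega_\Sigma$-compatible almost complex structure is automatically integrable, and the space of such structures is contractible, hence path-connected; so one may choose a smooth path $(j_\Sigma^t)_{t\in[0,1]}$ between them. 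Substituting each $j_\Sigma^t$ into \eqref{delbarA}, the vortex equations \eqref{vort1}--\eqref{vort2}, and the symplectic reduction \eqref{Mdsymplquot} produces a smooth one-parameter family of K\"ahler manifolds $\bigl(S^d\Sigma, J^{j_\Sigma^t}, \omega_{L^2}^{j_\Sigma^t}\bigr)$.

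Next I would put all these forms on a single smooth manifold. Because the induced complex structures $J^{j_\Sigma^t}$ on the symmetric product vary smoothly with $t$ over the compact interval, Ehresmann's theorem identifies the total family with a smooth fibre bundle; trivialising it yields diffeomorphisms $\Phi_t$ from a fixed model $M\cong S^d\Sigma$ onto the fibres, smooth in $t$ with $\Phi_0=\mathrm{id}$. Pulling back gives a smooth family $\hat\omega_t:=\Phi_t^*\,\omega_{L^2}^{j_\Sigma^t}$ of symplectic forms on $M$. The essential point is that each $\hat\omega_t$ is nondegenerate, being the image of a genuine K\"ahler form; this is precisely the property that a crude linear interpolation between $\hat\omega_0$ and $\hat\omega_1$ could not be expected to retain, and it is the reason for routing the argument through a path of complex structures.

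It then remains to check that $[\hat\omega_t]\in H^2(M;\RR)$ is constant. Here the Manton--Nasir formula \eqref{omegacoh} is decisive: it expresses $[\omega_{L^2}^{j_\Sigma}]$ solely in terms of $\tau$, ${\rm Vol}(\Sigma)$, $d$, and the generators $\theta,\eta\in H^2(S^d\Sigma;\ZZ)$, and the latter are defined purely topologically through the isomorphism $\Xi$ of \eqref{isoXi} on $H_0(\Sigma;\ZZ)\oplus\bigwedge^2H_1(\Sigma;\ZZ)$. With $\tau$, $\omega_\Sigma$ (hence ${\rm Vol}(\Sigma)$) and $d$ all held fixed, and with the trivialising diffeomorphisms $\Phi_t$ isotopic to $\Phi_0=\mathrm{id}$---so that they act trivially on cohomology and carry the topological classes $\theta,\eta$ to their counterparts on $M$---the class $[\hat\omega_t]$ does not depend on $t$. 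Moser's theorem on the compact manifold $M$ now supplies an isotopy $(\psi_t)$ with $\psi_0=\mathrm{id}$ and $\psi_t^*\hat\omega_t=\hat\omega_0$; taking $t=1$ and transporting back through $\Phi_1$ produces a diffeomorphism isotopic to the identity that pulls $\omega_{L^2}^{j_\Sigma^1}$ back to $\omega_{L^2}^{j_\Sigma^0}$, which is exactly a strong isotopy.

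The one genuinely nontrivial step I expect is the second one: verifying that $\omega_{L^2}^{j_\Sigma^t}$ really is a smooth family of forms on a single smooth manifold. This hinges on the smooth dependence of the Hitchin--Kobayashi correspondence of Theorem~\ref{moduliSd} on the complex datum $j_\Sigma$, together with the fact that the diffeomorphism type of $S^d\Sigma$ is purely topological (so that the various smooth structures induced by the $j_\Sigma^t$ are identified by diffeomorphisms isotopic to the identity). Once this smoothness and trivialisation are secured, the cohomological computation from \eqref{omegacoh} and Moser's theorem render the remainder routine.
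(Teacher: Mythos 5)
Your argument is correct and follows essentially the same route as the paper's proof: connect $j_\Sigma^0$ to $j_\Sigma^1$ through the contractible space $\mathcal{J}(\Sigma,\omega_\Sigma)$, note that the resulting $\omega_{L^2}^{j(t)}$ form a family of nondegenerate $2$-forms whose cohomology class is pinned down (independently of $t$) by \eqref{omegacoh}, and invoke Moser's theorem. The only divergence is your Ehresmann/trivialisation detour: the paper simply treats $S^d\Sigma$ as one fixed smooth manifold carrying all the forms $\omega_{L^2}^{j(t)}$ and applies Moser directly there, so that step is an optional elaboration (addressing a point the paper glosses over) rather than a different idea.
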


\begin{proof}
For the benefit of the reader, we spell out what we mean (see Definition~7.1 in \cite{CdSLSG}) by the two symplectic structures 
$\omega_{L^2}^{j_\Sigma^{0}}, \omega_{L^2}^{j_\Sigma^{1}}$
being strongly isotopic: there exists a differentiable map $\varrho: [0,1]\times S^d \Sigma \rightarrow S^d \Sigma$, interpreted as a path $t\mapsto \varrho_t$ of diffeomorphisms of $S^d\Sigma$, such that
$$\varrho_0={\rm id}_{S^d\Sigma}\qquad \text{ and } \qquad\varrho_1^*\left( \omega_{L^2}^{j_\Sigma^{1}}\right) = \omega_{L^2}^{j_\Sigma^{0}}$$hold.

For a given $\omega_ \Sigma$, we know~\cite[Proposition~4.1(i)]{McDSalST} that the space $\mathcal{J}(\Sigma,\omega_\Sigma)$ of almost complex structures on $\Sigma$ compatible with $\omega_\Sigma$ is contractible, hence path-connected; and that on a surface any almost complex structure is automatically integrable~\cite[Theorem~4.16]{McDSalST}.
So we may choose a  path of complex structures $j:[0,1]\rightarrow \mathcal{J}(\Sigma,\omega_\Sigma)$ with endpoints $j(0)=j_\Sigma^0$ and $j(1)=j_\Sigma^1$. 
For any $0\le t \le 1$, we can follow the recipe in Section~\ref{sec:L2geom} with $j_\Sigma=j(t)$ to obtain a symplectic form $\omega_{L^2}^{j(t)} \in \Omega^2(S^d\Sigma;\RR)$ (keeping all the data other than $j_\Sigma$ the same). In particular,
$\{ \omega_{L^2}^{j(t)}\}_{t\in [0,1]}$ is a family of nondegenerate 2-forms in $S^d\Sigma$. Moreover, the formula (\ref{omegacoh})
shows $[\omega_{L^2}^{j(t)}]$ is independent of $t$. These last two facts show that the assumptions of  Moser's theorem, as stated in \cite[Theorem 7.3]{CdSLSG}, are satisfied, and we infer the existence of maps $\varrho$ as above.
\end{proof}

This lemma implies that making a $j_\Sigma$-dependence
explicit in the notation for $\omega_{L^2}$ as in (\ref{MdKaehler}) is ``superfluous up to strong isotopy equivalence''. However, it is
not guaranteed that any of the diffeomorphisms $\rho_1$ furnished by Moser's trick are $(J^{j^0_\Sigma},J^{j^1_\Sigma})$-holomorphic in the sense that $J^{j^1_\Sigma}\circ {\rm d}\varrho_1 = {\rm d}\varrho_1 \circ J^{j^0_\Sigma}$; in particular, they certainly cannot be if $d=1$.
With this caveat, we will adhere to our previous convention
\begin{equation} \label{nicerMd}
\mathcal{M}_d=\left(S^d\Sigma, J^{j_\Sigma}, \omega_{L^2}\right)
\end{equation}
from now on.

\section{K\"ahler quantization} \label{sec:geomquant}

To set up more of our notation, we  briefly recall the steps involved in K\"ahler quantization, understood as geometric quantization in a complex polarisation~\cite{WooGQ, BatWeiGQ}, and then guide the reader through a few illustrative examples that provide a foretaste of the main results in this paper.

\subsection{Prequantization data} The starting point 
is a symplectic manifold $(M,\omega)$, the classical phase space. The aim of geometric quantization is to upgrade classical observables $f \in C^\infty(M)$ (which multiply through the Poisson bracket of $\omega$) by Hermitian operators acting on an appropriate Hilbert space. 
As a first step, one
seeks to construct a Hermitian
line bundle $\overline{\mathcal{L}} \rightarrow M$, equipped with a unitary connection $\nabla$ such that its curvature 2-form $F_\nabla \in \Omega^2(M; \mathbb{R})$  satisfies the condition
\begin{equation} \label{quant}
F_\nabla=\frac{{1}}{ \hbar} \,\omega
\end{equation}
for a fixed  parameter $\hbar \in \mathbb{R}_{>0}$. The data $(\overline{\mathcal{L}}, \nabla)$ with the properties above are referred to as a \emph{prequantum bundle} and a \emph{prequantum connection}, respectively.

Once the prequantization data $(\overline{\mathcal{L}}, \nabla)$ have been fixed, we can consider the \emph{prequantum Hilbert space} of smooth {\em wavesections} $\mathcal{H} = \{\psi \in \Gamma(M,\mathcal{L}):  \| \psi \|_{L^2} < \infty \}$, with its $L^2$-inner product determined by the Hermitian and symplectic structures. If $f\in C^\infty(M)$, we can define an operator
$\widehat{f} \in {\rm End}(\mathcal{H})$  by $\widehat{f} \psi := -{\rm i} \hbar \nabla_{X_f}\psi+ f \psi$, where $X_f$ denotes the Hamiltonian vector field of $f$ with respect to $\omega$. 
We shall henceforth set $\hbar=1$; 
usually one refers to the situation $\hbar \rightarrow 0$
as a ``semiclassical limit'', but this can still be simulated by considering the sequence of prequantizations
 $(\overline{\mathcal{L}}^{\otimes n},\nabla^{(n)})_{n \in \mathbb{N}}$ of $(M,n \omega)$,
 and then letting $n \rightarrow \infty$.

 It is evident that (\ref{quant}) requires the cohomology class $\left[\frac{1}{2\pi} \omega\right] \in H^2(M;\mathbb{R})$ to be
integral, i.e.~its pairing with any 2-homology class should yield integer values:
\begin{equation} \label{Weil}
\left\langle \left[\frac{1}{2\pi} \omega\right], \sigma\right\rangle \in \mathbb{Z}\qquad \text{for all}\; \sigma \in H_2(M;\mathbb{Z}).
\end{equation}
The requirement (\ref{Weil}) is known as the Weil (pre)-quantization condition.  
In fact, the constraint (\ref{Weil}) is also sufficient for at least one Hermitian line bundle $\overline{\mathcal{L}}\rightarrow M$ with connection $\nabla$ to exist (cf.~\cite{WooGQ}, Proposition~8.3.1), but there may be many possible choices. The ambiguity is parametrised by the space of flat line bundles on $M$, which can be interpreted as possible Aharonov--Bohm phases in the quantization and may be modelled more concretely as
\begin{equation}  \label{flat}
H^1(M,{\rm U}(1))\cong{\rm Hom}(H_1(M;\ZZ),{\rm U}(1)).
\end{equation}

\subsection{K\"ahler polarisations}

The next step is to introduce {\em polarisations} $P$ of $(M,\omega)$; see \cite{WooGQ, BatWeiGQ} for the general definition. By means of this gadget, one
truncates the prequantum Hilbert space to subspaces $\mathcal H_P \subset \mathcal H$ consisting of sections which are constant in half of the variables; this step is meant to implement the choice of representation 
in ordinary quantum mechanics. 

In the 
situation where $M$ has a complex structure $J$ for which $\omega$ is a K\"ahler form, there is a natural \emph{K\"ahler polarisation} $P_J$, which singles out {\em polarised wavesections} $\psi$ satisfying $\nabla_X \psi =0$ for all $X \in \Gamma(M,{\rm T}_J^{0,1}(M))$ in the splitting of the complexified tangent bundle associated to $J$.
This leads to the \emph{quantum Hilbert space} of all $J$-holomorphic, square-integrable sections
$${\mathcal H}_{P_J}:= H_J^0(M,\mathcal{L}) \cap \mathcal{H}.$$
Here, $\mathcal{L}\rightarrow M$ is given a holomorphic structure  by the operator $\overline{\partial}_\nabla  := \nabla_J^{(0,1)}$ defined by composing $\nabla$ with the projection onto  $(0,1)$-forms with respect to $J$ (for short, one says that
$\mathcal L$ is {\em polarised} by $J$). Indeed, since $\omega$ is a $(1,1)$-form, $\overline{\partial}_\nabla^2 =0$ and then $\bar\partial_\nabla$ defines an holomorphic structure by the Newlander--Nirenberg theorem. Conversely, in this setting we can recover the prequantum connection from a Hermitian holomorphic line bundle $\overline{\mathcal{L}}$ as its Chern connection, i.e.~the unique unitary connection $\nabla$ on $\mathcal L$ such that $\nabla^{(0,1)}$ is the holomorphic
structure operator~\cite[p.~73]{GH}.

In the case where $(M, J, \omega)$ is both K\"ahler and compact, and we are given a prequantum bundle $\mathcal L \rightarrow M$, there is a distinguished \emph{holomorphic quantization} for which
\begin{equation} \label{HilbKaeh}
\mathcal{H}_{P_J}=H_J^0(M, \mathcal{L})
\end{equation}
is a finite-dimensional vector space. 
Not all classical observables operate on $H^0(M, \mathcal{L})$ via the prequantum operator recipe given above, but only those $f \in C^\infty(M)$ for which $X_f$ is a Killing  field for the underlying metric.

The mathematical formulation of the principle of superposition (see \cite[p. 17]{Dir}) dictates that quantum states correspond precisely to complex rays in the quantum Hilbert space; in our setting, these are parametrised by points of the projective space $\PP(H^0(M, \mathcal{L}))$ associated to (\ref{HilbKaeh}).

\subsection{Metaplectic corrections} \label{sec:metaplectic}

So far we have equipped the quantum Hilbert space with a $L^2$-structure that makes use of the Liouville volume form $\frac{1}{n!} \omega^n$ on the base $M$, where $n=\frac{1}{2}{\dim_\RR M}$. However, in many examples it is more convenient (in addition to leading to more sensible results) to promote polarised sections to polarised half-forms or $\frac{1}{2}$-densities (see Appendix A in \cite{BatWeiGQ}). These pair without reference to further geometry on the base $M$ --- or, more generally, on the space of leaves of the polarisation.

In the setting of K\"ahler quantization over a compact manifold, this modification is conveniently recast as 
an upgrade of (\ref{HilbKaeh}) to
the \emph{corrected quantum Hilbert space}
$H_J^0 (M, \overline{\mathcal{L}}\otimes K_M^{1/2})$ where $K_M^{1/2}$ denotes a choice of square root (or spin structure) of the canonical bundle, assuming it exists. Thus incorporation of half-forms in K\"ahler quantization is tantamount to introducing a {\em metaplectic structure} --- we refer the reader to the discussion in Section 2 of~\cite{AndGamLau}.

 \subsection{A handful of examples}

At this point, we would like to discuss some prototypical examples, all based on the ingredients of Section~\ref{sec:vortices}.

\begin{example} \label{excurve}
Let $(\Sigma,\omega_\Sigma)$ be a connected compact orientable surface equipped with an area form $\omega_\Sigma \in \Omega^2(\Sigma;\mathbb{R})$, which is necessarily symplectic.
In this case, $H_2(\Sigma;\mathbb{Z})\cong\mathbb{Z}$ is generated by the fundamental class $[\Sigma]$ and the Weil prequantization conditions (\ref{Weil}) amount to
\begin{equation} \label{quantSig}
{\rm Vol}(\Sigma):=\int_\Sigma \omega_\Sigma\quad \in\quad 2 \pi \mathbb{N}.
\end{equation}
This guarantees that prequantizations will exist, but there will be infinitely many whenever the genus $g$ of $\Sigma$ is positive. The space of possible prequantizations is
the $2g$-dimensional torus
\begin{equation} \label{torus}
{\rm Hom}({\pi_1(\Sigma)}, {\rm U}(1))\cong{\rm Hom}({H_1(\Sigma;\mathbb{Z})}, {\rm U}(1))\cong T^{2g}.
\end{equation}
This space parametrises
holonomies of unitary flat connections on a degree zero bundle over $\Sigma$, when reference is made to a  particular prequantization $(\overline{Q},\nabla)$ with first Chern class  $c_1(Q)=\left[  \frac{\omega_\Sigma}{2\pi} \right]=k \, {\rm PD}([{\rm pt}])$, where $k$ is an integer.

In this example,  K\"ahler polarisations are prescribed by  the choice of a complex
structure $j_\Sigma$
on $\Sigma$ compatible with $\omega_\Sigma$. Then the phase space becomes a K\"ahler manifold
$(\Sigma, \omega_\Sigma, j_\Sigma)$. As above, we can also identify the prequantum connections $\nabla$ with holomorphic structures on $\overline{Q} \rightarrow \Sigma$.
In this way, the torus (\ref{torus}) is identified  with the component ${\rm Pic}^k(\Sigma)$ of the Picard group of $(\Sigma,j_\Sigma)$ parametrizing holomorphic line bundles of the same  degree $k$ as
$Q$, or (noncanonically)  with the Jacobian variety ${\rm Jac}(\Sigma)$ parametrizing ambiguities
with respect to a holomorphic structure of reference.
\end{example}

\begin{example} \label{rescale}
In this paper, the primary focus will be on the K\"ahler quantization of  $\mathcal{M}_d=(S^d\Sigma, \omega_{L^2})$. The case $d=1$, where $\mathcal{M}_1\cong S^1\Sigma\cong  \Sigma$,
 reduces to the example we have just discussed --- with the slight difference that one replaces $\omega_\Sigma$ by the K\"ahler structure ${\omega_{L^2}}$
 (which itself depends on the choice of an area form $\omega_\Sigma$), leading to a rescaling of the prequantization condition (\ref{quantSig}) to
 \begin{equation} \label{rescaledWeil}
 \frac{\tau}{2}\, {\rm Vol}(\Sigma) \; \in\; 2 \pi \mathbb{N},
 \end{equation}
 as can be read off from (\ref{omegacoh}). 
\end{example}

\begin{example}
It may seem less obvious at first that Example~\ref{excurve} relates to our problem of quantization of $\mathcal{M}_d$ for arbitrary $d\in \mathbb{N}$
and   $\tau>\frac{4\pi d}{{\rm Vol}(\Sigma)}$, but this turns out to be the case. 

There are a few immediate assertions one can make relating the two classes of examples. To start with, it follows from (\ref{omegacoh}) that the existence of prequantizations  is
still expressed by the rescaled condition (\ref{rescaledWeil}) for any $d$. This was already observed in \cite{RomQVS} in the case where $\Sigma$
is a 2-sphere, but it appears nowhere in the paper~\cite{Dey} or its erratum.
Moreover, it is also true that ambiguities in the choice of prequantum data are still parametrised by the Jacobian ${\rm Jac}(\Sigma)$ (a consequence of
the fact ${\rm Pic}^0(S^d\Sigma) \cong {\rm Jac}(\Sigma)\cong{\rm Pic}^0 (\Sigma)$, cf.~\cite{Pol}).
Another easy observation (already made in the Introduction) is that $S^d\Sigma$ has a natural set of complex structures $J^{j_\Sigma}$ which are in one-to-one correspondence with complex structures $j_\Sigma$ on $\Sigma$; but since a particular $j_\Sigma$ enters the classical data via $\bar\partial_A$, it also provides a canonical choice of complex polarisation $P_{J^{j_\Sigma}}$.

The main results of this paper yield a much less obvious relationship between this example and the very basic Example~\ref{excurve}. At this stage, we can formulate it in the following terms. The first statement (to be justified in Section~\ref{sec:qdata}) is that one can set up K\"ahler quantization of $\mathcal{M}_d$ efficiently in terms of prequantisation data for (the multiple $\frac{\tau}{2} \omega_\Sigma$ of) the area form used in the vortex equation (\ref{vort2}). A second, and in a sense converse statement (which is the main content of Section~\ref{sec:dims}), is
that the wavesections we shall obtain can be recast (in a precise sense) as quantum $d$-particle states defined on $\Sigma$ itself and valued in a prequantisation of $\frac{\tau}{2} \omega_\Sigma$; they also have an automatic spinorial component that amounts to a metaplectic correction
for K\"ahler quantization on $\Sigma$.
\end{example}

\section{Universal bundles and $L^2$-geometry} \label{sec:univ}

A useful viewpoint on the $L^2$-geometry of the moduli space of $d$-vortices is got by considering the product space $\mathcal{M}_d \times \Sigma$, which supports a universal line bundle $\mcr{P}_d$ carrying certain geometric data.

\subsection{A general observation}

Let $\mc G$ be a Lie group with complexification $\mc G^\bb C$ and Lie algebra $\mathfrak g$.
Suppose $\bar L \rightarrow V$ is a holomorphic Hermitian line bundle on a K\"ahler manifold $V$, that $\mc G$ acts freely on $L$ preserving the metric, and that this action
extends as a holomorphic action of $\mc G^\CC$.
Suppose furthermore that the induced $\mc G$-action on $V$ is Hamiltonian with moment map $\mu: V \to \mathfrak g^*$. Assume that there is a globally stable Hitchin--Kobayashi correspondence, in the sense that $M := \mu^{-1}(0)/\mc G \cong V/\mathcal{G}^\bb C$ holds.  Denote by $\nabla^V$  the Chern connection of $L$ on $V$, and let $\nabla^{\mu^{-1} (0)} := i^* \nabla^V$ for the inclusion $i: \mu^{-1}(0) \hookrightarrow V$. Then we can descend $L$  to $M$ in two ways. In the first case, the descent $\check L$ of $L$ inherits a Hermitian metric and a connection $\nabla$. In the second case, it inherits a holomorphic structure.

\begin{lemma}\label{lemma:KHcorr} In the above situation, the inherited connection $\nabla$ is the Chern connection, i.e. the unique unitary connection respecting the holomorphic structure.
\end{lemma}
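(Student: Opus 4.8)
The plan is to exploit the defining property of the Chern connection: on a Hermitian holomorphic line bundle it is the \emph{unique} connection that is at once unitary and compatible with the holomorphic structure, i.e.\ whose $(0,1)$-component equals the Dolbeault operator $\bar\partial$. Since $\check L \to M$ already carries both the descended metric (from the first construction) and the descended holomorphic structure (from the second), it suffices to verify that $\nabla$ enjoys these two properties relative to those data. Unitarity is the easy half: $\nabla^V$ is unitary, its restriction $\nabla^{\mu^{-1}(0)}=i^*\nabla^V$ is unitary for the restricted metric, and because $\mc G$ acts by isometries of $\bar L$ the metric descends to $\check L$; differentiating $\mc G$-invariant sections along horizontal lifts then exhibits $\nabla$ as unitary for the descended metric. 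This is a routine equivariance computation, so the substance lies entirely in checking $\nabla^{(0,1)}=\bar\partial_{\check L}$.

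First I would record the linear-algebra input supplied by the globally stable Hitchin--Kobayashi correspondence (a Kempf--Ness type analysis). At a point $p\in\mu^{-1}(0)$, the moment-map identity $\langle \mathrm d\mu_p(\cdot),X\rangle=\omega(X^\sharp,\cdot)$ together with freeness of the action shows that $J\,T_p(\mc G\cdot p)$ is normal to $\mu^{-1}(0)$, so that the horizontal space
$$H_p:=\big(T_p(\mc G\cdot p)\big)^{\perp}\cap T_p\mu^{-1}(0)=\big(T_p(\mc G^\CC\cdot p)\big)^{\perp}$$
is the orthogonal complement of the complex subspace $T_p(\mc G^\CC\cdot p)$, hence is $J$-invariant. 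This $H_p$ is the common horizontal model for both quotients: it is identified with $T_{[p]}M$ via $\pi_\RR:\mu^{-1}(0)\to M$, and the complex structure it inherits is precisely the one $M\cong V/\mc G^\CC$ receives from $V$.

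Then I would run the comparison of the two descents on a local holomorphic frame $e$ of $\check L$. Pulling back under $\pi_\CC:V\to V/\mc G^\CC$ produces a $\mc G^\CC$-invariant holomorphic section $\tilde e$ of $L$, and since $\pi_\CC\circ i=\pi_\RR$ its restriction $i^*\tilde e$ represents $\pi_\RR^*e$. For an antiholomorphic vector $\bar Y\in T^{0,1}_{[p]}M$, its horizontal lift $\widetilde{\bar Y}$ lies in the $(0,1)$-part of $H_p\otimes\CC$, hence is tangent to $\mu^{-1}(0)$ and of type $(0,1)$ in $V$. Naturality of the pullback connection and the Chern property $(\nabla^V)^{(0,1)}=\bar\partial_L$ on $V$ then give
$$\pi_\RR^*\big(\nabla^{(0,1)}_{\bar Y}e\big)=i^*\big(\nabla^V_{\widetilde{\bar Y}}\tilde e\big)=i^*\big((\bar\partial_L\tilde e)(\widetilde{\bar Y})\big)=0,$$
because $\tilde e$ is holomorphic. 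Thus $\nabla^{(0,1)}e=0$, and the Leibniz rule propagates this to arbitrary $\sigma=f\,e$, yielding $\nabla^{(0,1)}\sigma=\bar\partial f\cdot e=\bar\partial_{\check L}\sigma$. Combined with unitarity, this identifies $\nabla$ as the Chern connection.

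The main obstacle I anticipate is purely the bookkeeping around the two quotient maps: one must ensure that the horizontal antiholomorphic directions used in the comparison are simultaneously (a) tangent to $\mu^{-1}(0)$, so that $i^*\nabla^V$ genuinely applies, and (b) of type $(0,1)$ for the holomorphic structure $M$ inherits from $V/\mc G^\CC$. Both hinge on the $J$-invariance of $H_p$ established in the second step, which is exactly where the Hitchin--Kobayashi hypothesis enters. I would also flag that the whole computation is pointwise and formal, so it carries over verbatim to the infinite-dimensional setting of the intended application, where $V=\mc N_d$ and $\mc G=\mathrm{Aut}_\Sigma(L)$ and the same symplectic reduction \eqref{Mdsymplquot} is in force.
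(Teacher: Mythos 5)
Your proposal is correct and follows essentially the same route as the paper: both arguments dismiss unitarity as immediate, pull back a local holomorphic frame through the complex quotient to obtain a holomorphic section upstairs, use that the Chern connection form of $\nabla^V$ in such a frame is of type $(1,0)$ on all of ${\rm T}V$, and reduce the whole matter to the $J$-invariance of the horizontal subspace ${\rm T}_{[D]}M\subset {\rm T}_D V$. The only difference is cosmetic: you spell out the moment-map/Kempf--Ness computation showing the horizontal space equals $\bigl({\rm T}_p(\mc G^{\CC}\cdot p)\bigr)^{\perp}$ and is hence $J$-invariant, where the paper simply appeals to the definition of the K\"ahler structure on the quotient.
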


\begin{proof} Unitarity of $\nabla$ follows from the definition. We need to prove that it respects the holomorphic structure. Let $\pi:V \to M$ and $\pi_0: \mu^{-1}(0) \to M$ denote the projections. Also let $s$ be a local holomorphic frame on $\check L \rightarrow M$, and $\omega$ the corresponding connection 1-form. This is the form whose value at $v \in {\rm T}_{[D]} M := 0 \oplus {\rm T}_{[D]} M \subseteq {\rm T}_D \mu^{-1}(0)  = \mathfrak g \oplus {\rm T}_{[D]} M$ is given by $\nabla_{v} \pi_0^* s/\pi_0^*s $. We need to prove that, for the almost complex structure $I$ on $M$ and any $v \in {\rm T}_D M$, we have $I \omega(I v) = {\rm i} \omega(v)$. It follows basically from definition that $\nabla^{\mu^{-1} (0)} \pi_0^* s/ \pi_0^* s =  i^* \nabla^V \pi^* s/ \pi^* s$ evaluated on ${\rm T}_{[D]} M$. Set $ \nabla^V \pi^* s/ \pi^* s =: \tilde{\omega}$. Also essentially by definition, $\tilde{\omega}(Iv) = {\rm i} \tilde{\omega}(v)$ for any $v \in {\rm T}_D V =  \mathfrak g^{\bb C} \oplus {\rm T}_{[D]} M$. Hence it is enough to prove that ${\rm T}_{[D]} M \subset  {\rm T}_D V $ is invariant under the complex structure on $V$. But this is ensured by the definition of the K\"ahler structure on $M$.
\end{proof}

\subsection{The universal degree $d$ line bundle}
\label{sec:Poincline}

First of all,  $\mathcal{M}_d \times \Sigma \cong S^d \Sigma \times \Sigma$ possesses a natural complex structure $(J^{j_\Sigma},j_\Sigma)$, specified by the complex structure $j_\Sigma$ on the Riemann surface $\Sigma$ alone. For an effective divisor $D$ of degree $d$, denote by $[D]$ the corresponding analytic subset of $\Sigma$ (with multiplicities). Then $S^d \Sigma \times \Sigma$ comes equipped with a
universal divisor 
\begin{equation} \label{univdiv}
\mcr{D}_d:=\bigcup_{D\in S^d \Sigma} \{ D \} \times [D] \subseteq S^d\Sigma  \times \Sigma
\end{equation}
which in turn determines a universal holomorphic line bundle $\mcr{P}_d = \mathcal{O}(\mathcal{D}_d)$ over $S^d\Sigma \times \Sigma $. An alternative description of this  line bundle is got by considering the product $\mathcal{V}  \times L$, where $\mathcal{V} \subset {\mathcal A}\times \Gamma$ is the complex submanifold of
vortices $({\rm d}_A,\phi)$ solving the vortex equations (\ref{vort1})--(\ref{vort2}) in a line bundle $\overline{L}\rightarrow \Sigma$ of degree $d$.  The gauge group
$\mathcal{G}={\rm Aut}_\Sigma(\overline{L})$ acts on both factors
and one can take the space of orbits
\begin{equation} \label{univbdle}
 (\mathcal{V} \times L)/\mathcal{G} = \mcr{P}_d
\end{equation}
as the total space of the universal bundle.
Note that the restriction of $\mcr{P}_d$ to each curve of the form $\{[{\rm d}_A,\phi]\}\times \Sigma \subset \mathcal{M}_d\times \Sigma$ yields a holomorphic
line bundle over ${\Sigma}$, which is isomorphic to $L\rightarrow \Sigma$ equipped with the holomorphic structure associated to the degree $d$ divisor $(\phi)$ in $\Sigma$:
\begin{equation} \label{PonSigma}
\mcr{P}_d|_{\{[{\rm d}_A,\phi]\}\times \Sigma} \cong L.
\end{equation}
This description has the advantage of endowing $\mcr{P}_d$  with a Hermitian structure, which is obtained by using these isomorphisms to pull back to the universal bundle the Hermitian structure that has been fixed in $\overline{L}\rightarrow \Sigma$.

\subsection{Canonical connection and section}

Since our moduli spaces parametrise pairs of connections and sections, there are two objects naturally attached to  \makebox{$\overline{\mcr{P}}_d \rightarrow \mathcal{M}_d \times \Sigma$:}
\begin{itemize}
\item a canonical unitary connection ${\rm d}_\mathcal{A}$ in
$\overline{\mcr{P}}_d$;
\item
a canonical holomorphic section $\Phi \in H^0(\mathcal{M}_d \times \Sigma,\mcr{P}_d)$.
\end{itemize}
The various descriptions of the $L^2$-geometry of $\mathcal{M}_d$ proposed in the literature have resorted (more directly or less directly) to at least one of these two
gadgets in a crucial way.

Canonical connections such as ${\rm d}_\mathcal{A}$ are familiar from other moduli constructions in gauge theory~\cite{DonKro}. In our setting,
${\rm d}_{\mathcal{A}}$ is the restriction to $\mathcal{M}_d \times \Sigma$ of a unitary connection in the bundle ${\rm pr}_2^*\overline{L}/\mathcal{G}$ mentioned above,
which was described by Perutz in~\cite[Sec.~2.2.2]{Per} in terms of two other auxiliary connections: a tautological, $\mathcal{G}$-invariant unitary connection on
\makebox{${\rm pr}_2^*\overline{L} \rightarrow (\mathcal{A}\times \Gamma(\Sigma,L))\times \Sigma $,} and a certain connection 1-form on
$\mathcal{A}\times\Gamma(\Sigma,L)$ related to gauge fixing.
Note that we can employ Lemma \ref{lemma:KHcorr} to our situation by taking $V=\mathcal{N}_d$, and conclude that the canonical connection
is the Chern connection of the universal bundle $\overline{\mcr{P}}_d \rightarrow \mathcal{M}_d \times \Sigma$.

The canonical section $\Phi: \mathcal{M}_d \times \Sigma \rightarrow \mcr{P}_d$  is most conveniently described by means of the model (\ref{univbdle}) as the projection to $\mathcal{G}$-orbits of the
$\mathcal{G}$-equivariant and holomorphic map $\mathcal{V}\times \Sigma \rightarrow \mathcal{V}\times L$ given by
\begin{equation}
\Phi: (({\rm d}_A,\phi),\mathbf{x}) \mapsto  ( ({\rm d}_A,\phi), \phi(\mathbf{x})).
\end{equation}
In analogy to the canonical connection, it follows that $\Phi$ restricts to each curve $\{ [{\rm d}_A,\phi]\} \times \Sigma$ to yield the section $\phi$, making use of (\ref{PonSigma}), and also that its divisor of zeroes is $(\Phi)=\mcr{D}_d$.

\subsection{Fibre integration formulas}

One point of view on how the universal/canonical objects we have introduced determine geometry on $\mathcal{M}_g$ uses the operation of
{fibre integration} with respect to the projection $\mathcal{M}_d\times \Sigma \rightarrow \mathcal{M}_d$ onto the first factor.
At the topological level, this corresponds to the slant product~\cite[p.~280]{Hat}
\begin{equation}
\cdot \setminus \cdot: H^4(\mathcal{M}_d\times \Sigma;\mathbb{R}) \times H_2(\Sigma;\mathbb{R}) \rightarrow H^2(\mathcal{M}_d;\mathbb{R})
\end{equation}
by the fundamental class $[\Sigma] \in H_2(\Sigma;\mathbb{Z})$. This viewpoint leads to the description of the K\"ahler class $[\omega_{L^2}]$ obtained by Perutz~\cite{Per},
from which one recovers the formula (\ref{omegacoh}) propounded by Manton and Nasir.

Remarkably, a refinement of this description of the K\"ahler class associated to the $L^2$-geometry has also been argued to
hold at the level of 2-forms (see Theorem 1 and Equation (27) in \cite{Per}, Proposition~3.2 in \cite{BapL2M}, as well as Proposition~7.1 of \cite{BisSch} for a generalisation):
\begin{equation}\label{L2metricformula}
{\omega_{L^2}}=-\frac{1}{4\pi}\int_{\Sigma} \left(    {\tau}\,  p^* \omega_\Sigma  \wedge F_\mathcal{A} +  F_\mathcal{A} \wedge F_\mathcal{A} \right),
\end{equation}
where $p:\mathcal{M}_d\times \Sigma \rightarrow \Sigma$ is the projection
onto the second factor.
In the following section, we make a short interlude to discuss in more detail the algebraic-geometric setting where
formulas such as \eqref{L2metricformula} emerge.

\section{Deligne's formalism of the Quillen metric} \label{sec:Quil}

In this section, we review the metric originally introduced by Quillen~\cite{Qui} in determinants of cohomology, and the related natural metric on Deligne's parings. General references are \cite{determinant, Sou}, and we refer the reader to these for further details. All the vector bundles we consider in this section carry holomorphic structures.

\subsection{The determinant of the cohomology}

Let $\pi: \mcr C \to S$ be a submersive holomorphic map of quasi-projective complex manifolds, with connected fibres ${\mcr C}_s$ (for $s\in S$) of complex dimension one. We will call such a map a 
{\em family of curves}, and sometimes
will abbreviate it as $\mcr{C}/S$. 

By a relative divisor on $\mcr{C}/S$, we mean a (Cartier) divisor on $\mcr C$ whose intersection with each fibre yields a
divisor of that fibre. 
We let
$K_{\mcr{C}/S}:=K_{\mcr{C}}\otimes \pi^*(K_{S}^{-1})$ denote the relative canonical bundle on ${\mcr C}$ associated to $\pi$. It is equipped with a Hermitian metric whenever (the tangent bundles of) $\mcr C$ and $S$ carry Hermitian metrics themselves, which we shall assume from now on.

A vector bundle $E$ on $\mcr C$ restricts to each fibre as
a vector bundle $E_s \rightarrow {\mcr C}_s$ of the same rank. Then we define a line bundle $\lambda(E) \rightarrow S$ whose fibres at every point $s \in S$ are constructed from sheaf cohomology groups as follows:
\begin{equation} \label{detcohomology}
 \lambda(E)_s := \det H^0(\mcr C_s, E_s) \otimes \det H^1(\mcr C_s, E_s)^{\vee},
\end{equation}
where ${\rm det}$ stands for the top exterior power of a vector space, and $(\cdot )^{\vee}$ for the dual. This bundle is called the {\em determinant of the cohomology} of $E \rightarrow \mcr{C}/S$. The fact that the complex lines (\ref{detcohomology}) glue together to form a line bundle $\lambda(E)$ over $S$ is the Knudsen--Mumford determinant construction (see~\cite{Knudsen-I}).

Whenever $E\rightarrow {\mcr C}$ carries a Hermitian metric,  then {\em Quillen's metric}  on \makebox{$\lambda(E)\rightarrow S$} is defined as follows.
  First of all, for every fibre $\mcr C_s$ the groups $H^i(\mcr C_s, E_s)$ can, by Hodge theory, be represented by $E_s$-valued harmonic forms, and this induces an
 $L^2$-metric in $\lambda(E) \rightarrow S$ (as in our
 discussion in Section~\ref{sec:L2geom}) that we shall denote as $h_{L^2}|_s$.
Consider the spectral zeta-function defined by
\begin{equation}\label{zeta}
t\;\mapsto\; \zeta_{\Delta_{\bar\partial}}(t) := \sum_{\lambda \in {\rm Spec}^+(\Delta_{\bar\partial})} \frac{1}{\lambda^t},
\end{equation}
where the sum is over all positive eigenvalues $\lambda$ of the Kodaira--Laplace operator
$\Delta_{\overline \partial}=\bar \partial \bar\partial^* + \bar\partial^* \bar\partial$ (and the adjoints are taken with respect to $h_{L^2}|_s$), acting on $\bar{E}_s$-valued smooth functions on the fibre $\mcr{C}_s$. For $\operatorname{Re}(t) \gg 0$,  the sum in (\ref{zeta}) converges absolutely, and it defines a holomorphic function which can be analytically continued to $t=0$. Then Quillen's metric $h_Q$ on $\lambda(E)$ is defined fiberwise as
\begin{equation} \label{Quillen}
h_Q|_s :=  \exp\left(\zeta'_{\Delta_{\bar\partial}}(0)\right)\, h_{L^2}|_s.
\end{equation}
We use the notation $\lambda(\bar E)_Q$ when we need to emphasise that 
we equip $\lambda(E)$ with the Hermitian metric (\ref{Quillen}), rather than the $L^2$-metric implied by writing $\overline{\lambda(E)}$.

\subsection{Deligne's pairing on families of curves and the norm functor}

We now introduce two constructions that will appear in computations of Quillen's metric via the Riemann--Roch formula for the curvature.

Given two line bundles $L$ and $M$ on $\mcr C$ for a family of curves $\mcr C/S$ over a complex manifold $S$, we recall that there is a natural line bundle on $S$, namely their {\em Deligne pairing}
\begin{equation}\label{Deligne}
\langle L, M \rangle := \lambda((L-\mathcal{O}_\mcr{C})\otimes (M-\mathcal{O}_{\mcr{C}})) := \lambda(L \otimes M) \otimes \lambda(L)^{-1} \otimes \lambda(M)^{-1} \otimes \lambda(\mathcal{O}_{\mcr C}).
\end{equation}
Moreover, given two smooth Hermitian metrics in $L$ and $M$, the line bundle
\makebox{$\langle L, M \rangle \rightarrow S$} carries a natural metric, induced by Quillen's  metrics on the different terms in (\ref{Deligne}). More concretely, we have (see also \cite[section 6]{determinant}, and \cite{Elkik} for higher-dimensional analogues):

\begin{definition} \label{Delignities}
Let the notation be as above.
\begin{enumerate}
\item The line bundle $\langle L, M \rangle \rightarrow S$ is defined in terms of generators and relations as follows:
\begin{itemize}
\item Whenever $\ell$ and $m$  are meromorphic sections of $L$ and $M$ such that $\ddiv{\ell}$ and $\ddiv{m}$ are relative divisors whose supports have empty intersection, we have a non-zero section
$\langle \ell, m  \rangle$ of $\langle L, M \rangle.$
\item If $\ell'$ is another rational section of $L$, then $f := \ell/\ell' \in \CC({\mcr C})$ satisfies 
$$\langle \ell, m \rangle (s)= f((\op{div} m)|_{{\mcr C}_s})  \langle \ell', m \rangle(s)\quad \text{ for }s\in S,$$
where $f(\sum_i n_i {\bf x}_i) = \prod_i f({\bf x}_i)^{n_i}$; and the corresponding statement on the second factor also holds.
\end{itemize}
\item If $L$ and $M$ carry Hermitian metrics, we define a Hermitian metric on $\langle L, M \rangle \rightarrow S$ (referred to as the natural metric) fibrewise by the formula
\begin{eqnarray*}
\log|\langle \ell, m \rangle|^2 & := & 
 \log|\ell|^2 ({\op{div} m)} + \log|m|^2( {\op{div} \ell)}
 + \frac{1}{\pi} \int_{{\mcr C}/S} {\partial}{\bar \partial} \left( -\log|\ell|\right) ^2 \log|m|^2
 \\
& = & \log|m|^2 ( {\op{div} \ell) } +  \int_{{\mcr C}/S} c(\overline{L}) \log |m|^2 .
\end{eqnarray*}
\end{enumerate}

\end{definition}

Let us make some comments to make the notation more explicit.
The equations above should
be interpreted fibrewise: by $\int_{{\mcr C}/S}$ we mean fibre integration, that is, the integral $\int_{{\mcr C}_s}$ for each $s\in S$; this is
precisely what is meant by the integral $\int_\Sigma$ in the formula (\ref{L2metricformula}).
Besides (and more generally), we will always write
$$
c(\overline{L}):= \frac{i}{2\pi}\, F_{A^{\rm Chern}}\; \in \;
\Omega^{1,1}(X;\RR)
$$
for the {\em first} Chern form of a Hermitian holomorphic line bundle $\overline L \rightarrow X$ on a complex manifold $X$, where $F_{A^{\rm Chern}} $ denotes the curvature of its Chern connection. Thus $[c(\overline{L})]=c_1(L) \in H^2(X;\ZZ)$ is the
first Chern class of $L\rightarrow X$ (an integral 2-cohomology class
defined independently of the geometry on the bundle).

We remark that the isomorphism $\langle L, M \rangle \simeq \langle M, L \rangle $ given by $\langle \ell, m \rangle \mapsto  \langle m, \ell \rangle$ is an isometry whenever $M$ and $L$ have Hermitian metrics, even though this might not be obvious from the formulas above.

 In this article, we will make use of the following two properties of the natural Hermitian metric on Deligne's pairings:
\begin{proposition}\label{deligneproductcurvature}
 Let $\overline{L}, \overline{M}$ be two Hermitian line bundles on $\mcr C$. If a family of curves ${\mcr C}/S$ is given, then we have that:
\begin{enumerate}
 \item  The following formula for the curvature holds: $$c(\overline{\langle L , M \rangle}) = \int_{\mcr{C}/S} c(\overline L) \wedge c(\overline M).$$
 \item  If $ L \simeq  L'$ is an isomorphism of holomorphic line bundles over $\mcr C$ endowed with Hermitian metrics $h$ and $h'$, then the squared norm of the induced isomorphism $$\langle \overline L, \overline M \rangle \simeq \langle \overline L', \overline M \rangle $$ is fibrewise given by the function $a: S\rightarrow \RR_{>0}$ with
 \begin{equation}\label{changemetricdeligne}
 a(s) := \exp \int_{{\mcr C}_s} \log (h|_s/h'|_s)\, c_1(\overline M).
 \end{equation}
\end{enumerate}
\end{proposition}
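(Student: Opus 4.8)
The plan is to derive both identities directly from the fibrewise formula for the natural metric in Definition~\ref{Delignities}(2), the only analytic inputs being the Poincar\'e--Lelong equation $\pp \log|m|^2 = c(\overline{M}) - \delta_{\op{div} m}$ (as currents on $\mcr{C}$, and similarly for $\ell$) together with the fact that fibre integration $\int_{\mcr{C}/S}$ commutes with $\partial$ and $\bar\partial$ because the fibres $\mcr{C}_s$ are compact without boundary. First I would fix meromorphic sections $\ell$ of $L$ and $m$ of $M$ whose relative divisors $\op{div}\ell, \op{div} m$ have disjoint support. Then $\langle \ell, m\rangle$ is a nowhere-vanishing section, i.e.\ a local holomorphic frame of $\langle L, M\rangle$ over the open locus of $S$ where this disjointness persists, and both the curvature and the comparison of metrics can be read off from $\log|\langle \ell, m\rangle|^2$ in this frame.

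For part (1) I would compute $c(\overline{\langle L, M\rangle}) = \pp\log|\langle \ell, m\rangle|^2$ on the base $S$ by applying $\pp$ to the two summands in Definition~\ref{Delignities}(2) and commuting $\partial\bar\partial$ past $\int_{\mcr{C}/S}$. Since $c(\overline{L})$ is $d$-closed of type $(1,1)$, the Leibniz rule collapses $\pp\!\left(c(\overline{L})\log|m|^2\right)$ to $c(\overline{L})\wedge \pp\log|m|^2$, and Poincar\'e--Lelong rewrites this as $c(\overline{L})\wedge\left(c(\overline{M}) - \delta_{\op{div} m}\right)$; the evaluation term $\log|m|^2(\op{div}\ell)$ is treated identically after writing it as $\int_{\mcr{C}/S}\log|m|^2\,\delta_{\op{div}\ell}$ and using that $\delta_{\op{div}\ell}$ is $\partial,\bar\partial$-closed. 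The smooth parts assemble into $\int_{\mcr{C}/S} c(\overline{L})\wedge c(\overline{M})$, which is the claimed answer; the cross term $\delta_{\op{div} m}\wedge\delta_{\op{div}\ell}$ vanishes identically because the supports are disjoint; and there remain two residual divisor contributions, $-\int_{\op{div} m/S} c(\overline{L})$ and $+\int_{\op{div}\ell/S} c(\overline{M})$. To dispose of these I would run the same computation starting from the $\ell\leftrightarrow m$, $L\leftrightarrow M$ symmetric form of the metric, which is legitimate since $\langle \ell, m\rangle\mapsto\langle m,\ell\rangle$ is an isometry: averaging the two expressions for the same curvature form cancels the residual terms (this is Weil reciprocity in disguise) and leaves precisely $\int_{\mcr{C}/S} c(\overline{L})\wedge c(\overline{M})$.

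For part (2) I would observe that the induced isomorphism carries the frame $\langle \ell, m\rangle$ to $\langle \ell', m\rangle$, where $\ell'$ is the section of $L'$ corresponding to $\ell$ under $L\simeq L'$. Because this isomorphism respects the holomorphic structure, $\op{div}\ell' = \op{div}\ell$ and the local holomorphic function representing $\ell$ is unchanged, so in the defining formula only the metric-dependent pieces move: $\log|\ell|^2$ shifts by $\log(h'/h)$ and $c(\overline{L})$ shifts by $\pp\log(h'/h)$. Writing $\phi := \log(h/h')$ and subtracting the two instances of Definition~\ref{Delignities}(2), the $\op{div}\ell$-term and the $m$-term are common and cancel, while integration by parts on the compact fibres together with Poincar\'e--Lelong reduces the surviving integral to the fibrewise pairing of $\phi$ against $c(\overline{M})$, giving $\log a(s) = \int_{\mcr{C}_s}\log(h|_s/h'|_s)\, c_1(\overline{M})$ as asserted (the orientation of the ratio $h/h'$ being pinned down by whether one measures $a$ on the isomorphism or on its inverse).

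The step I expect to be the main obstacle is the rigorous bookkeeping of the current manipulations: justifying that $\partial\bar\partial$ commutes with $\int_{\mcr{C}/S}$ in the presence of the divisorial singularities of $\log|\ell|^2$ and $\log|m|^2$, and verifying that the leftover $\delta$-contributions cancel rather than accumulate. This is exactly where the disjointness of $\op{div}\ell$ and $\op{div} m$ (which voids the $\delta\wedge\delta$ term) and the symmetry of the pairing (which removes the single-divisor terms by reciprocity) do the essential work; once these are in place, both formulas follow.
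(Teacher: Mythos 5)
The paper does not actually prove Proposition~\ref{deligneproductcurvature}: it is quoted from Deligne's \emph{Le d\'eterminant de la cohomologie} (\S 6) and Elkik, so there is no in-paper argument to compare against. Your direct verification from the fibrewise formula in Definition~\ref{Delignities}(2) is essentially the standard proof from those sources, and the skeleton --- Poincar\'e--Lelong, commuting $\partial\bar\partial$ with fibre integration over the compact fibres, disjointness of $\op{div}\ell$ and $\op{div} m$ to kill the $\delta\wedge\delta$ term, and the symmetry of the pairing to dispose of the leftover divisor contributions --- is sound. Two remarks on the bookkeeping you rightly flag as the delicate point. First, for part (1) your averaging argument is legitimate (the two computations give $\int c\wedge c \pm R$ for the same well-defined curvature form, forcing $R=0$), but note that the isometry of the swap $\langle\ell,m\rangle\mapsto\langle m,\ell\rangle$ is itself exactly Weil reciprocity plus the Green identity, which the paper only asserts; if you instead start from the reduced two-term form of the metric (the second line of Definition~\ref{Delignities}(2), in which one of the two metrics enters only through its Chern form), the residual terms $\pm\int_{\op{div}\ell/S}c(\overline M)$ cancel against each other directly and no symmetrisation is needed. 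Second, for part (2) the cleanest route is to use the symmetric presentation $\log\|\langle\ell,m\rangle\|^2=\log\|m\|^2(\op{div}\ell)+\int_{\mcr C/S}\log\|\ell\|^2\,c(\overline M)$: since the holomorphic isomorphism preserves $\op{div}\ell$ and shifts $\log\|\ell\|^2$ by exactly $\log(h/h')$, formula \eqref{changemetricdeligne} drops out immediately with no integration by parts, avoiding the spurious $\log(h/h')(\op{div} m)$ boundary term that appears (and must be seen to cancel) if one differentiates the asymmetric presentation. Be aware also that the displayed formula in Definition~\ref{Delignities}(2) contains normalisation misprints (the $\frac{1}{\pi}\partial\bar\partial$ prefactor and the placement of the square), so pinning the signs to a single consistent convention for $\pp\log\|s\|^2=c(\overline L)-\delta_{\op{div} s}$, as you do, is the right discipline.
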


The second construction we will need is the {\em norm}.
For a relative divisor $D$ in ${\mcr C}/S$, this is a multiplicative functor 
\begin{equation}
N_{D/S}: \Picu(D) \longrightarrow \Picu(S).
\end{equation}
Here, $\Picu(X)$ denotes the category of line bundles on the variety $X$, whose objects are line bundles while its morphisms are isomorphisms. 
The definition is straightforward whenever the restriction $p:=\pi|_D: D \to S$ is a topological cover, whose definition we recall for the convenience of the reader: for a small enough open neighbourhood $U\subset S$ of any point, $p^{-1}(U)= \coprod_j D_j$ and there exist holomorphic maps $s_j:U \xrightarrow{\simeq} D_j$ with $p\circ s_j = {\rm id}_U$.
Now for each line bundle $\mcr{L}\in \Picu(D)$ we construct $N_{D/S}(\mcr{L})  \in \Picu(S) $ by gluing together local trivialisations over such trivialising open sets $U$:
\begin{equation} \label{def:norm}
N_{D/S}(\mcr{L})|_{U}:=\bigotimes_j s_j^*(\mcr{L}).
\end{equation}
If ${\mcr L} \rightarrow D$ comes endowed with a Hermitian metric, \eqref{def:norm} produces an induced Hermitian metric on $N_{D/S}(\mcr{L}) \rightarrow S$.

The two constructions introduced in this section are related as follows: for a relative divisor $D$ in ${\mcr C}/S$, there is a canonical isomorphism \begin{equation} \label{isonorm}
\langle \mathcal{O}(D), \mcr{L}\rangle \simeq N_{D/S}(\mcr{L}|_D).
\end{equation}
If $s_D$ denotes the canonical section of $\calo(D)$ with zero locus $D$, the isomorphism is defined by the map $\langle s_D, \ell \rangle \mapsto N_{D/S}(\ell|_D)$. If ${\calo(D)}$ and ${\mcr L}$ are equipped with Hermitian metrics, then the norm of the isomorphism in (\ref{isonorm}) is given by
\begin{equation}\label{normrestriction}
\exp\left(-\frac{1}{2}\int_{\mcr C/S} \log|s_D|^2 c(\overline{\mcr L})\right).
\end{equation}

\subsection{Riemann--Roch isomorphism and the curvature formula}

The Riemann--Roch isomorphism and the curvature formula relate the bundles and metrics introduced in the previous two sections. To be
more precise,  the following result expresses a nontrivial relationship between determinants of the cohomology and Deligne pairings on families of curves.
\begin{theorem} [\cite{determinant}, Th\'eor\`eme 9.9]  \label{Delignethm}
Let $\mcr{L} \in \Pic(\mcr{C})$. There are canonical isomorphisms
\begin{equation}\label{thmRR1}
\lambda(\mcr L)^{2} \otimes \lambda(\calo_{\mcr C})^{-2} \simeq \langle \mcr L, \mcr L \otimes K_{\mcr C/S}^{-1} \rangle\end{equation}
and 
\begin{equation}\label{thmRR2}
\lambda(\mcr L)^{12}  \simeq \langle K_{\mcr C/S}, K_{\mcr C/S} \rangle \otimes \langle \mcr L, \mcr L \otimes K_{\mcr C/S}^{-1} \rangle^6.\end{equation}
\end{theorem}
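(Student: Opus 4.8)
The plan is to prove the fundamental isomorphism \eqref{thmRR1} first, and then to deduce \eqref{thmRR2} from it together with the Mumford isomorphism $\lambda(\calo_{\mcr C})^{12}\simeq \langle K_{\mcr C/S},K_{\mcr C/S}\rangle$. At the level of Chern classes both statements are just Grothendieck--Riemann--Roch: expanding $\operatorname{ch}(\mcr L)\operatorname{Td}(T_{\mcr C/S})$ on a family of curves and fibre-integrating the degree-four part gives $c_1(\lambda(\mcr L)^2\otimes\lambda(\calo_{\mcr C})^{-2})=\int_{\mcr C/S}c_1(\mcr L)\,(c_1(\mcr L)-c_1(K_{\mcr C/S}))$, where the twist by $\lambda(\calo_{\mcr C})^{-2}$ is precisely what absorbs the $\tfrac{1}{12}c_1(K_{\mcr C/S})^2$ term of the Todd class; by Proposition~\ref{deligneproductcurvature}(1) this matches $c_1(\langle \mcr L,\mcr L\otimes K_{\mcr C/S}^{-1}\rangle)$. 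The real content is to upgrade this numerical equality to a \emph{canonical} isomorphism of line bundles, functorial in $\mcr L$ and compatible with base change, so the bulk of the work is categorical rather than computational.

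First I would reduce to the case of divisors. The Knudsen--Mumford determinant is multiplicative in short exact sequences, so for an effective relative divisor $D$ the sequence $0\to\mcr L\to\mcr L\otimes\calo(D)\to(\mcr L\otimes\calo(D))|_D\to 0$ yields a canonical isomorphism $\lambda(\mcr L\otimes\calo(D))\simeq\lambda(\mcr L)\otimes N_{D/S}((\mcr L\otimes\calo(D))|_D)$, using that the determinant of a sheaf supported on $D$ (finite over $S$) is its norm. On the Deligne side, biadditivity of the pairing together with the dictionary \eqref{isonorm}, namely $\langle\calo(D),\mcr M\rangle\simeq N_{D/S}(\mcr M|_D)$, shows that replacing $\mcr L$ by $\mcr L\otimes\calo(D)$ modifies $\langle\mcr L,\mcr L\otimes K_{\mcr C/S}^{-1}\rangle$ by the \emph{same} norm factors. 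Thus the discrepancy functor $F(\mcr L):=\lambda(\mcr L)^2\otimes\lambda(\calo_{\mcr C})^{-2}\otimes\langle\mcr L,\mcr L\otimes K_{\mcr C/S}^{-1}\rangle^{-1}$ is invariant under twisting by $\calo(D)$; since $F(\calo_{\mcr C})$ is canonically trivial (the pairing against $\calo_{\mcr C}$ being trivialised by its canonical section) and every line bundle is, \'etale-locally on $S$, of the form $\calo(D_1-D_2)$, one concludes that $F$ is locally trivial. A rigidity argument -- the only automorphisms of the trivialised bundle are scalars, pinned down on a reference fibre -- then glues these local trivialisations into a single canonical isomorphism descending to $S$ and commuting with base change, establishing \eqref{thmRR1}.

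To obtain \eqref{thmRR2} I would raise \eqref{thmRR1} to the sixth power, giving $\langle\mcr L,\mcr L\otimes K_{\mcr C/S}^{-1}\rangle^{6}\simeq\lambda(\mcr L)^{12}\otimes\lambda(\calo_{\mcr C})^{-12}$, so that \eqref{thmRR2} becomes equivalent to $\langle K_{\mcr C/S},K_{\mcr C/S}\rangle\simeq\lambda(\calo_{\mcr C})^{12}$. This Mumford isomorphism does \emph{not} follow formally from \eqref{thmRR1}: substituting $\mcr L=K_{\mcr C/S}$ only returns $\lambda(K_{\mcr C/S})^2\simeq\lambda(\calo_{\mcr C})^2$, which is consistent with Serre duality but carries no information about $\langle K_{\mcr C/S},K_{\mcr C/S}\rangle$. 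It must be proved separately, for instance by the same divisor/norm bootstrap applied to the relative dualising sheaf, the coefficient $12$ again emerging from the $\tfrac{1}{12}c_1(K_{\mcr C/S})^2$ term of the Todd class.

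The step I expect to be the main obstacle is exactly this Mumford isomorphism with its coefficient $12$, together with the rigidity and descent argument needed to render all the local isomorphisms canonical and base-change compatible in the proof of \eqref{thmRR1}. The numerical matching of Chern classes is routine; the difficulty lies entirely in lifting the Todd-class computation to a functorial isomorphism of line bundles over the base, which is where Deligne's careful bookkeeping of signs and trivialisations in \cite{determinant} does the genuine work.
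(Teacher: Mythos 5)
First, a point of order: the paper does not prove this statement --- it is quoted verbatim from Deligne (Th\'eor\`eme 9.9 of \cite{determinant}) and used as a black box, so there is no in-paper argument to compare against. Judged on its own terms, your outline does track Deligne's actual strategy: verify the equality of first Chern classes by Grothendieck--Riemann--Roch, reduce to (sums of) sections via the multiplicativity of the Knudsen--Mumford determinant and the identification $\langle \calo(D), \mcr M\rangle \simeq N_{D/S}(\mcr M|_D)$, use the canonical triviality of the pairing against $\calo_{\mcr C}$ as the base case, and treat the isomorphism $\lambda(\calo_{\mcr C})^{12}\simeq\langle K_{\mcr C/S},K_{\mcr C/S}\rangle$ as a separate, harder input for \eqref{thmRR2}. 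Your observation that \eqref{thmRR2} does not follow formally from \eqref{thmRR1} (substituting $\mcr L=K_{\mcr C/S}$ only recovers Serre duality) is correct and worth having made explicit.

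There are, however, two concrete gaps. The first is that your key cancellation claim is false as stated: twisting by $\calo(D)$ changes $\lambda(\mcr L)^{2}$ by $N_{D/S}\bigl((\mcr L\otimes\calo(D))|_D\bigr)^{2}$, while by biadditivity and \eqref{isonorm} it changes $\langle\mcr L,\mcr L\otimes K_{\mcr C/S}^{-1}\rangle$ by $N_{D/S}\bigl((\mcr L^{2}\otimes\calo(D)\otimes K_{\mcr C/S}^{-1})|_D\bigr)$. These are \emph{not} the same norm factors; their ratio is $N_{D/S}\bigl((K_{\mcr C/S}\otimes\calo(D))|_D\bigr)$, so the ``discrepancy functor'' $F$ is not invariant under twisting until this term is canonically trivialised. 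That trivialisation is the adjunction isomorphism $(K_{\mcr C/S}\otimes\calo(D))|_D\simeq K_{D/S}$ (together with the residue trivialisation of $K_{D/S}$ when $D$ is a disjoint union of sections), and normalising it compatibly with base change is precisely one of the places where Deligne's bookkeeping does real work --- it cannot be elided if the resulting isomorphism is to be canonical. The second gap is that the Mumford isomorphism $\lambda(\calo_{\mcr C})^{12}\simeq\langle K_{\mcr C/S},K_{\mcr C/S}\rangle$ is asserted rather than proved: ``the same divisor/norm bootstrap applied to the relative dualising sheaf'' does not produce it, since the bootstrap only controls how both sides of \eqref{thmRR1} vary with $\mcr L$ and is blind to the $\mcr L$-independent term $\tfrac{1}{12}\pi_*c_1(K_{\mcr C/S})^2$; a genuinely separate construction (Deligne's, or one exploiting that $\operatorname{Pic}$ of the moduli of curves is generated by $\lambda$) is required, and with it the factor $12$. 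Since the whole point of the theorem is the canonical, functorial isomorphism rather than the cohomological identity, these are the two places where your proposal stops short of a proof.
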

Deligne also proves that when these line bundles are endowed with Quillen's metrics and the natural metrics on Deligne's pairings in item (2) of Definition~\ref{Delignities}, \eqref{thmRR1} is an isometry, whereas \eqref{thmRR2} is an isometry up to a topological constant that only depends on the genus of any fiber of the family.
This implies the corollary that we shall state next, and which can be found in \cite{BisFree1, BisFree2}, though it is in fact used in the original proof of Theorem~\ref{Delignethm}.

For the convenience of the reader, we make the following recollection. For a K\"ahler (1,1)-form $\omega_M$ on a complex $n$-manifold $(M,j_M)$, one writes $g_M(\cdot,\cdot)=\omega_M(\cdot,j_M \cdot)$ for the underlying Riemannian metric. We denote its Ricci curvature by $\op{Ric}(\omega_M)$;
this is the (1,1)-form given by $\pp \log \det \left[ g_M\left( \frac{\partial}{\partial z_j}, \frac{\partial}{\partial \bar z_i}\right) \right]_{i,j=1}^n$ in terms of any local chart of holomorphic coordinates $z_1, \ldots, z_n$ of $M$. A basic fact in
K\"ahler geometry is that this form always represents the
first Chern class of (the tangent bundle of) $M$, or of the anti-canonical bundle $K^{-1}_M$, irrespective of the K\"ahler structure taken.

\begin{corollary} \label{DRRcurvature} Suppose $S$ is K\"ahler.
Let $\LL$ be a Hermitian line bundle on  the family of curves $\mcr{C}:=S\times \Sigma \rightarrow S$,  and $\omega_\Sigma$ a
K\"ahler form on $\Sigma$. Then
\begin{equation} \label{niceCor}
c\left(\lambda(\LL)_Q\right) = \frac{1}{2}\int_{\mcr{C}/S} c(\LL) \wedge (c(\LL) + \op{Ric} (\omega_\Sigma)).
\end{equation}
\end{corollary}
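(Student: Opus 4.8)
The plan is to derive \eqref{niceCor} from the first isometric isomorphism \eqref{thmRR1} of Theorem~\ref{Delignethm}, together with the curvature formula of Proposition~\ref{deligneproductcurvature}(1). First I would apply the first Chern form $c(\cdot)$ to both sides of \eqref{thmRR1}. Since $c(\cdot)$ is additive under tensor products, changes sign under duals, and is invariant under isometries of Hermitian holomorphic line bundles, the isometry attached to \eqref{thmRR1} (Quillen metrics on the $\lambda$'s, natural metric on the pairing) yields
\[ 2\,c(\lambda(\LL)_Q) - 2\,c(\lambda(\calo_{\mcr C})_Q) = c\bigl(\overline{\langle \LL, \LL \otimes K_{\mcr C/S}^{-1}\rangle}\bigr). \]
Applying Proposition~\ref{deligneproductcurvature}(1) to the right-hand side and expanding $c(\LL \otimes K_{\mcr C/S}^{-1}) = c(\LL) - c(K_{\mcr C/S})$, it becomes $\int_{\mcr C/S} c(\LL)\wedge(c(\LL) - c(K_{\mcr C/S}))$.

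Next I would identify the relative canonical contribution. For the product family $\mcr C = S\times\Sigma$ one has $K_{\mcr C/S}\cong p^*K_\Sigma$, with $p:\mcr C\to\Sigma$ the second projection, and its metric induced by $\omega_\Sigma$; by the K\"ahler-geometry fact recalled just before the statement, the Chern form of the anti-canonical bundle is the Ricci form, so $-c(K_{\mcr C/S}) = c(\overline{K_{\mcr C/S}^{-1}}) = \op{Ric}(\omega_\Sigma)$ (pulled back by $p$). Substituting, the right-hand side equals $\int_{\mcr C/S} c(\LL)\wedge(c(\LL) + \op{Ric}(\omega_\Sigma))$, precisely twice the claimed expression.

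It therefore remains to show that the correction term vanishes, i.e.\ $c(\lambda(\calo_{\mcr C})_Q) = 0$; this is the step I expect to demand the most care, because for a \emph{general} family of curves the Quillen curvature of $\lambda(\calo)$ is the nonzero Mumford-type class $\tfrac{1}{12}\int_{\mcr C/S} c(K_{\mcr C/S})^2$. The product structure rescues us: $c(K_{\mcr C/S}) = -p^*\op{Ric}(\omega_\Sigma)$ is pulled back from $\Sigma$, hence is a top-degree form along each fibre, so that $c(K_{\mcr C/S})^2$ integrates to zero over the fibres. I would record the vanishing in either of two equivalent ways. Directly: all fibres of $\mcr C = S\times\Sigma$ are the \emph{same} Hermitian curve $(\Sigma,\omega_\Sigma)$, so the Hodge-theoretic $L^2$-metric and the torsion factor $\exp(\zeta'_{\Delta_{\bar\partial}}(0))$ defining $h_Q$ are independent of $s\in S$, whence $\lambda(\calo_{\mcr C})_Q$ is flat. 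Alternatively, staying within Deligne's formalism: insert $\mcr L=\calo_{\mcr C}$ into \eqref{thmRR2}, take curvatures, and use Proposition~\ref{deligneproductcurvature}(1) together with $\int_{\mcr C/S}c(K_{\mcr C/S})^2 = \int_{\mcr C/S}\op{Ric}(\omega_\Sigma)^2 = 0$ to obtain $12\,c(\lambda(\calo_{\mcr C})_Q)=0$, the topological constant in \eqref{thmRR2} playing no role for curvatures. Combining this vanishing with the first two paragraphs gives \eqref{niceCor}.
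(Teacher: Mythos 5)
Your proposal is correct and follows essentially the route the paper indicates: take Chern forms of the isometry \eqref{thmRR1}, evaluate the Deligne pairing curvature via Proposition~\ref{deligneproductcurvature}(1) with $K_{\mcr C/S}=p^*K_\Sigma$, and discard the $\lambda(\calo_{\mcr C})$ term — which is exactly the paper's remark that for the product family this is a trivial bundle with constant Quillen metric, hence flat. Your two alternative justifications of that vanishing (constancy of the fibre and of the analytic torsion, or \eqref{thmRR2} plus $\int_{\mcr C/S}c(K_{\mcr C/S})^2=0$) are both sound and merely make explicit what the paper leaves as a one-line aside.
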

\noindent
Here we are using that $\lambda(\calo_{\mcr C})\rightarrow S$ is a trivial line bundle with constant metric, which has zero curvature. We have implicitly endowed (the tangent bundle of) $S \times \Sigma$ with a product K\"ahler metric for the construction of Quillen's metric
featuring in formula (\ref{niceCor}).

\section{The K\"ahler quantizations of 
${\mathcal{M}_d}$} \label{sec:qdata}

In this section we determine all possible K\"ahler quantizations of the K\"ahler manifold (\ref{nicerMd}), where $\omega_{L^2}$ is obtained from an area form $\omega_\Sigma$ assumed to satisfy the Weil integrality condition (\ref{rescaledWeil}). Recall from Section~\ref{sec:univ} that we denote by $\PD$ the line bundle corresponding to the (relative) universal degree-$d$ divisor ${\mcr D}_d$ on a family of curves $$\pi: S \times \Sigma \to S$$
determined by the first projection, 
where we take $S=S^d \Sigma\cong {\mathcal{M}_d}$ equipped with the induced complex structure $J^{j_\Sigma}$. In what follows, we shall consider the constructions reviewed in Section~\ref{sec:Quil} as applied to this $\pi$
while keeping the degree $d$ fixed. 

\subsection{Relation between the Picard groups of $\Sigma$ and $S^d\Sigma$}

Let $q \colon \Sigma^d \to S^d \Sigma$ be the quotient map, $p_i:\Sigma^d  \to \Sigma$ the $i$-th projection map, and $\tilde p:\Sigma^d \times \Sigma \rightarrow \Sigma$, 
$p:S^d\Sigma\times \Sigma \rightarrow \Sigma$
the projections onto the second factor. For a line bundle $L$ on $\Sigma$, we can consider the line bundle $L^{\boxtimes d} := \bigotimes_{i=1}^d p_i^\ast L \rightarrow \Sigma^d$. Since this bundle is $\mathfrak{S}_d$-invariant, it naturally descends to $S^d\Sigma$; we shall still denote this quotient by $L^{\boxtimes d}$, as no confusion will arise when the base is specified.

\begin{proposition} \label{Picmap} 
 The map $L \mapsto L^{\boxtimes d}$ just defined also admits the
 description
\begin{equation}\label{mapPics}
\Pic(\Sigma) \to \Pic(S^d \Sigma), \; L \mapsto \langle \PD, p^\ast L \rangle.
\end{equation}
It induces an isomorphism 
$$\Picf(\Sigma) \stackrel{\cong}{\longrightarrow} \Picf(S^d \Sigma)$$
between moduli spaces of flat line bundles.

\end{proposition}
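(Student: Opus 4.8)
The statement has two parts: first, an identification of the two descriptions of the map $L \mapsto L^{\boxtimes d}$, and second, the claim that it restricts to an isomorphism on degree-zero components. I would treat these separately.

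For the identification of the two descriptions, the plan is to unwind the Deligne pairing $\langle \PD, p^*L\rangle$ using the relation (\ref{isonorm}) between the pairing and the norm functor. Since $\PD = \calo(\mcr{D}_d)$ with $\mcr{D}_d$ the universal relative divisor, formula (\ref{isonorm}) gives $\langle \calo(\mcr{D}_d), p^*L\rangle \simeq N_{\mcr{D}_d/S}\bigl((p^*L)|_{\mcr{D}_d}\bigr)$. The essential geometric input is that the restriction of the second projection $p\colon S^d\Sigma\times\Sigma\to\Sigma$ to $\mcr{D}_d$ is, over the locus of reduced divisors, a $d$-sheeted cover: the fibre of $\mcr{D}_d\to S^d\Sigma$ over a divisor $D=x_1+\dots+x_d$ is $\{x_1,\dots,x_d\}$, and the local sections $s_j$ of the norm construction pick out the individual points. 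Pulling back $p^*L$ along these sections and tensoring reproduces exactly $\bigotimes_{i=1}^d p_i^* L$ after passing from $\Sigma^d$ to the quotient $S^d\Sigma$. I would thus verify that the norm functor evaluated on $(p^*L)|_{\mcr{D}_d}$ agrees with the descent of $L^{\boxtimes d}$, checking compatibility on the overlaps where the labelling of sheets is permuted, which is precisely where the $\mathfrak{S}_d$-invariance guarantees a well-defined descent.

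For the isomorphism on degree-zero parts, I would exploit the functoriality of the construction (it is additive in $L$ since both $\boxtimes$ and the Deligne pairing turn tensor products into tensor products), so that $L\mapsto L^{\boxtimes d}$ is a homomorphism of Picard groups and restricts to $\Picf(\Sigma)\to\Picf(S^d\Sigma)$ once one checks it preserves degree zero. The cleanest route to bijectivity is to compare with the standard algebraic geometry isomorphism $\mathrm{Pic}^0(S^d\Sigma)\cong\mathrm{Jac}(\Sigma)\cong\mathrm{Pic}^0(\Sigma)$ realised through the Abel--Jacobi map $\mathrm{AJ}_d\colon S^d\Sigma\to\mathrm{Jac}(\Sigma)$ (already invoked in Section~\ref{sec:Kaehlerclasses}, with the reference to~\cite{Pol}). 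Concretely, I would argue that pullback $\mathrm{AJ}_d^*\colon \Picf(\mathrm{Jac}(\Sigma))\to\Picf(S^d\Sigma)$ is an isomorphism, and then identify the composite of our map with the known isomorphism $\Picf(\Sigma)\cong\Picf(\mathrm{Jac}(\Sigma))$ (autoduality of the Jacobian), so that $L\mapsto L^{\boxtimes d}$ factors as a composition of isomorphisms. Alternatively, and perhaps more self-containedly, one can work on $\Sigma^d$ first: the map $L\mapsto L^{\boxtimes d}$ on $\Sigma^d$ is an isomorphism onto the $\mathfrak{S}_d$-invariant, ``diagonal'' part of $\Picf(\Sigma^d)\cong\Picf(\Sigma)^{\oplus d}$, and then one shows that $q^*\colon\Picf(S^d\Sigma)\to\Picf(\Sigma^d)$ is injective with image exactly this invariant part.

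The main obstacle I anticipate is the bijectivity rather than the construction. Injectivity and surjectivity of $\Picf(\Sigma)\to\Picf(S^d\Sigma)$ are not formal: one must control $\Picf(\Sigma^d)^{\mathfrak{S}_d}$ and the behaviour of $q^*$ near the singular (nonreduced) divisors where $q$ fails to be étale, i.e.\ along the big diagonal in $\Sigma^d$. The surjectivity amounts to the nontrivial fact that $\Picf(S^d\Sigma)$ is no larger than $\Picf(\Sigma)$, which is where the Jacobian identification from~\cite{Pol} does the real work; I would lean on that result to close the argument, having reduced the problem to a comparison of two homomorphisms between the same abelian varieties that agree on a generating set.
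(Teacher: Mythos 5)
Your proposal follows essentially the same route as the paper: the first part unwinds $\langle \PD, p^*L\rangle$ via the norm functor and the isomorphism (\ref{isonorm}), and the second part reduces bijectivity of $\Picf(\Sigma)\to\Picf(S^d\Sigma)$ to the Abel--Jacobi comparison and the injectivity statement from Theorem~19.7 of \cite{Pol}, exactly as the paper does. The only execution differences are that the paper carries out the norm computation after pulling back to $\Sigma^d$, where $q^*\mcr{D}_d$ splits globally into the $d$ sections $\sigma_i$ (so the branching of $\mcr{D}_d\to S^d\Sigma$ along the big diagonal, which you would otherwise need to handle when extending from the reduced locus across a divisor, never arises), and that it uses the explicit map $\alpha\colon \mathbf{z}\mapsto \mathbf{z}+(d-1)\mathbf{x}$ together with $\alpha^*L^{\boxtimes d}=L$ to make your ``comparison of homomorphisms agreeing on a generating set'' concrete.
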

\begin{proof}
By construction, it is enough to verify that $q^* \langle \mcr P_d, p^\ast L \rangle = L^{\boxtimes d}\rightarrow \Sigma^d$. Let $D_i$ be the divisor in $\Sigma^d \times \Sigma $ determined by image of the section $\sigma_i: \Sigma^d \to \Sigma^d \times \Sigma$ (to the projection onto the first factor) given by $ \sigma_i:  ({\bf z}_1 , \ldots  ,{\bf z}_d) \mapsto ({\bf z}_1 , \ldots  ,{\bf z}_d, {\bf z}_i)$; then $q^*\PD = \calo(\sum_{i=1}^d  D_i) $. It follows from \eqref{def:norm} and \eqref{isonorm} that $q^* \langle \mcr P_d, p^\ast L \rangle \simeq \bigotimes_{i=1}^d \sigma_i^* \tilde p^\ast L$. The first part of the proposition follows, since $\tilde p \circ \sigma_i = p_i$. 

For the second statement, notice that any flat line bundle $L \rightarrow \Sigma$ can be realized as a Hermitian holomorphic line bundle $\overline{L}$ with flat Chern connection, and this induces a flat Chern connection on the pull-back $p^* L$. If we are also given a Hermitian metric on $\PD$, there is a natural metric on the holomorphic line bundle $\langle \PD, p^*L \rangle \rightarrow S^d\Sigma$ as explained in point (2) of Definition~\ref{Delignities}, whose Chern connection is flat by Proposition \ref{deligneproductcurvature}. Thus $L \mapsto \langle \PD, p^\ast L \rangle$ indeed restricts to a map $\Picf(\Sigma) \to \Picf(S^d \Sigma)$. Now fix a point ${\bf x}\in \Sigma$. We claim the map $\alpha \colon \Sigma \to S^d \Sigma$ given by ${\bf z} \mapsto q({\bf z}, {\bf x}, \ldots, {\bf x}) = {\bf z} + (d-1){\bf x}$ induces an isomorphism $\alpha^\ast: \Picf(S^d \Sigma) \to \Picf(\Sigma)$. Since $L = \alpha^\ast L^{\boxtimes d}$, it would then follow immediately that the map $L \mapsto L^{\boxtimes d}$ is an isomorphism. 

To justify the claim, notice that we have a commutative diagram 
$$\xymatrix{ \Sigma \ar[r]^{\alpha} \ar[rd]_{{\rm AJ}_1} & S^d \Sigma \ar[d]^{{\rm AJ}_d} \\ 
& {\rm Jac}(\Sigma) }$$
where the downward arrows represent the Abel--Jacobi maps given by \makebox{${\rm AJ}_1(\bf{z}) = \int^{\bf{z}}_{\bf{x}}$} and \makebox{${\rm AJ}_d(D) = \int^D_{d\bf{x}}$} modulo periods, respectively. It is a
well-known fact that the induced map \makebox{$\Picf({\rm Jac}(\Sigma)) \to \Picf(\Sigma)$} is an isomorphism, so it follows that the claim amounts to the statement that the induced map $\Picf({\rm Jac}(\Sigma)) \to \Picf(S^d \Sigma)$ is an isomorphism. To show this, it suffices to establish injectivity, since we are dealing with a map of Abelian varieties.
But this is part of the result expressed in Theorem 19.7 of reference \cite{Pol}.
\end{proof}

The significance of this rather general result in the context of our article can be stated as follows.

\begin{corollary}\label{Picmap1}
Any Hermitian line bundle $\overline{\mcr M} \rightarrow S^d \Sigma$  equipped with a unitary flat connection $\nabla$ is isomorphic to a Deligne pairing $\langle \OPD, p^\ast \overline{L}\rangle$ with the natural metric and its Chern connection, for some Hermitian holomorphic line bundle $\overline{L} \rightarrow \Sigma$. 
\end{corollary}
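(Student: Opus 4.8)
The plan is to promote the holomorphic isomorphism furnished by Proposition~\ref{Picmap} to an isometric isomorphism of Hermitian line bundles respecting connections. First I would extract a holomorphic bundle from the given data: since $\nabla$ is unitary and flat, its curvature vanishes (in particular it is of type $(1,1)$), so $\nabla^{0,1}$ endows $\mcr M$ with a holomorphic structure for which $\nabla$ is the Chern connection. Under the standard Hodge-theoretic correspondence on the compact connected K\"ahler manifold $S^d\Sigma$ between flat unitary line bundles and numerically trivial holomorphic ones, this realizes $\mcr M$ as an element of $\Picf(S^d\Sigma)$. By the surjectivity of the isomorphism in Proposition~\ref{Picmap}, there is then a line bundle $L \in \Picf(\Sigma)$ together with a holomorphic isomorphism $\langle \PD, p^\ast L\rangle \cong \mcr M$.

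Next I would fix a flat Hermitian metric on $L$, which exists because $\deg L = 0$, turning $L$ into a Hermitian holomorphic bundle $\overline L$ with flat Chern connection. By Proposition~\ref{deligneproductcurvature}(1) the Deligne pairing then satisfies $c(\langle \OPD, p^\ast\overline L\rangle) = \int_{\mcr C/S} c(\OPD)\wedge c(p^\ast\overline L) = \int_{\mcr C/S} c(\OPD)\wedge p^\ast c(\overline L) = 0$, so it too carries a flat Chern connection. At this stage $\overline{\mcr M}$ and $\langle \OPD, p^\ast\overline L\rangle$ are two flat Hermitian structures on holomorphically isomorphic line bundles over $S^d\Sigma$. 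The crucial observation is that such a flat metric is unique up to a positive multiplicative constant: the ratio of two flat Hermitian metrics on a fixed holomorphic line bundle is a smooth positive function $\rho$ with $\partial\bar\partial\log\rho = 0$, hence $\log\rho$ is pluriharmonic, thus harmonic, thus constant on the compact connected manifold $S^d\Sigma$.

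It then remains to absorb this constant, and this I would do by rescaling the metric on $L$. Replacing $\overline L$ by $c\cdot\overline L$ for a constant $c>0$ alters the natural metric on $\langle \OPD, p^\ast\overline L\rangle$ by a fixed power of $c$: via the isometry $\langle \OPD, p^\ast\overline L\rangle \cong \langle p^\ast\overline L, \OPD\rangle$ and Proposition~\ref{deligneproductcurvature}(2), the rescaling multiplies the fibre metric by $\exp\!\big((\log c)\int_{\mcr C_s} c_1(\OPD)\big) = c^{\,d}$ (up to the sign of the exponent, depending on conventions), since $\OPD$ restricts to each fibre $\{s\}\times\Sigma$ as a line bundle of degree $d$. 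As $d\ge 1$, the map $c\mapsto c^{\,d}$ is surjective onto $\RR_{>0}$, so a suitable choice of $c$ matches the two flat metrics exactly.

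Once the metrics agree, the holomorphic isomorphism $\langle \OPD, p^\ast\overline L\rangle \cong \overline{\mcr M}$ is an isometry, and since the Chern connection of a Hermitian holomorphic line bundle is determined by its metric together with its holomorphic structure, this isometry intertwines the two flat connections automatically. I expect the main obstacle to be precisely this passage from a merely holomorphic isomorphism to one compatible with both metric and connection --- that is, controlling the constant discrepancy between the two flat metrics --- which the uniqueness-up-to-constant argument, combined with the nontriviality ($d\ge 1$) of the fibrewise degree of $\OPD$, resolves.
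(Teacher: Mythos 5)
Your proposal is correct and follows essentially the same route as the paper's proof: identify $\mcr M$ holomorphically with a Deligne pairing via Proposition~\ref{Picmap}, observe that the compatible flat Hermitian metric is unique up to a positive constant, and absorb that constant by rescaling the metric on $L$, which scales the natural metric on the pairing by the $d$-th power via \eqref{changemetricdeligne}. The only (immaterial) difference is that you establish the uniqueness-up-to-constant by noting that the log-ratio of two flat metrics is pluriharmonic hence constant, whereas the paper argues directly that a Chern connection determines its metric up to a constant.
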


\begin{proof}

Consider more generally a holomorphic line bundle $L\rightarrow Y$ on a compact complex manifold, equipped with two Hermitian metrics $\| - \|_1$ and $\| - \|_2$, with associated Chern connections $\nabla_1, \nabla_2$. Suppose that they define gauge-equivalent Chern connections, i.e. $\nabla_1 = \nabla_2 + {\rm d}\eta$ for a smooth function $\eta:Y\rightarrow \RR$. If $\ell:U\rightarrow L$ is a local holomorphic frame for $L$, we can write $\nabla_i = {\rm d} - \partial \log \| \ell\|_i^2$ for $i=1,2$, and conclude that $\nabla_1 - \nabla_2 = \partial \varphi$ where $\| - \|_2=\| - \|_1 e^{-\varphi}$ for a smooth function $\varphi: Y \to \bb R$. Hence $\overline{\partial} \eta =0$, so $\eta$ is holomorphic and necessarily constant since $Y$ is compact. It follows that the underlying Hermitian metric of the Chern connection is unique up to a multiplicative constant. 

The particular Hermitian metric on $\overline{\mcr M}$ in the statement of the corollary is thus unique up to a constant. By Proposition~\ref{Picmap} we know that $(\mcr{M}, \nabla)$ is of the shape $(\langle \PD, p^\ast {L}\rangle, \nabla^{\rm Chern})$, where $\nabla^{\rm Chern}$ is the Chern connection for the metric on Deligne pairings induced by suitable metrics on $\PD$ and $L$. If we modify the metric on $L$ by a constant $e^{-C}$ then, according to \eqref{changemetricdeligne}, the metric on the pairings will change by $\exp(C \cdot d)$; so we also obtain all scalar multiples in this way, and we conclude the proof. \end{proof}

\subsection{General construction of the K\"ahler quantizations}

Now we can turn directly to our main goal. Recall that, by \eqref{L2metricformula},
$$ \frac{1}{2\pi}\omega_{L^2} = -\frac{1}{2} \int_{\Sigma} c(\OPD) \wedge \left(c(\OPD) - \frac{\tau}{2 \pi} \, \omega_\Sigma \right),$$ which we rewrite as $$-\frac{1}{2}\int_{\Sigma}\left\{  c(\OPD) \wedge \left(c(\OPD) + \op{Ric}(\omega_{\Sigma})\right) - c(\OPD) \wedge \left(\op{Ric}(\omega_\Sigma) + \frac{\tau}{2 \pi}\, \omega_\Sigma \right) \right\} .$$
Making use of Corollary \ref{DRRcurvature}, we obtain  \begin{equation}\label{expressioncurvaturL2} \frac{1}{2\pi}\omega_{L^2} = c(\lambda(\OPD)^{-1}) +  \int_{\Sigma} c(\OPD) \wedge \left(\frac{1}{2}\op{Ric}(\omega_\Sigma) + \frac{\tau}{4 \pi}\, \omega_\Sigma\right).\end{equation}
This last equation motivates the following definition.

 \begin{definition} \label{quantizationdata} 
  Suppose that a K\"ahler (area) form $\omega_\Sigma \in \Omega^{(1,1)}(\Sigma;\RR)$  is given such that (\ref{rescaledWeil}) holds for a fixed $\tau$ satisfying (\ref{Bradlow}).
  \begin{enumerate}
 \item[(i)]
 We will denote (as in Example~\ref{excurve})
 \begin{equation}\label{kdefined}
 k:= \frac{\tau}{4 \pi} {\rm Vol}(\Sigma)= \frac{\tau}{4\pi}\int_\Sigma \omega_\Sigma \;\; \in \;\NN.
 \end{equation}
  \item[(ii)]
  Let $\overline{Q}\rightarrow \Sigma$ be any prequantization of
  $\left(\Sigma,\frac{\tau}{2}\omega_{\Sigma}\right)$,
  with holomorphic structure induced by the K\"ahler
  polarisation of $j_\Sigma$ (as in Example~\ref{excurve} but with a different normalisation of the area form).
  Let also $K_\Sigma^{\pm1/2} \rightarrow \Sigma$ denote 
  the spin bundle of $j_\Sigma$ (and its dual, respectively) determined by
  a choice of metaplectic structure on $\Sigma$, together with the (spin) Chern connection induced by the K\"ahler metric $g_\Sigma$ of $\omega_\Sigma$. Then we write
  \begin{equation} \label{bundleM}
  \overline{M}:=Q\otimes K_\Sigma^{-1/2} \rightarrow \Sigma
  \end{equation}
  for the tensor product bundle equipped with the product Hermitian structure and the product connection.
 \end{enumerate}
 \end{definition}

We are now ready to state the main result of this section, which describes all the prequantizations of the K\"ahler form $\omega_{L^2}$  on $S^d\Sigma$, polarised by the complex structure $J^{j_\Sigma}$.
Our formulation will take as input geometric data on $\Sigma$ encapsuled by the line bundle (\ref{bundleM}). We take advantage
of the universal degree-$d$ bundle $\overline{{\mcr P}_d} \rightarrow S^d\Sigma\times \Sigma$, equipped with the Hermitian structure induced from the line bundle $\overline{L}\rightarrow \Sigma$ where our vortices live (see Section~\ref{sec:Poincline}), to capture the $L^2$-geometry as suggested by
the formula (\ref{expressioncurvaturL2}).

\begin{theorem} [K\"ahler quantization of vortices]
\label{thmquantization}
Under the assumptions in Definition~\ref{quantizationdata}, the  Hermitian holomorphic line bundle 
\begin{equation}\label{prequantumbundle}
\overline{\mcr L}_{\overline{M}} := \lambda(\OPD)^{-1} \otimes \langle \OPD, p^\ast \overline{M} \rangle \longrightarrow S^d\Sigma,
\end{equation}
constructed from a line bundle $\overline{M}\rightarrow \Sigma$  of the kind specified in Definition \ref{quantizationdata}(ii), 
provides a $J^{j_\Sigma}$-polarised prequantization $(\overline{\mcr L}, \nabla)$ of $(S^d\Sigma, \omega_{L^2})$.
Conversely, all K\"ahler quantizations of $(S^d\Sigma,J^{j_\Sigma}, \omega_{L^2})$ can be obtained in this way up to isomorphism of  Hermitian holomorphic line bundles.
\end{theorem}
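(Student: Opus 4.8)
The statement has two halves: a direct part, asserting that $\overline{\mcr L}_{\overline M}$ is a prequantization, and a converse, asserting exhaustiveness. The plan is to settle the direct part by a curvature computation assembled from the tools of Section~\ref{sec:Quil}, and the converse by a difference construction built on Corollary~\ref{Picmap1}.

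For the direct part, I would first note that $\overline{\mcr L}_{\overline M}$ is Hermitian holomorphic by construction, since determinants of cohomology and Deligne pairings are formed in the holomorphic category with respect to $(J^{j_\Sigma}, j_\Sigma)$ on $S^d\Sigma\times\Sigma$; hence its Chern connection $\nabla$ has $(0,1)$-part equal to its $\bar\partial$-operator and is automatically $J^{j_\Sigma}$-polarised, and it only remains to verify the prequantization condition $c(\overline{\mcr L}_{\overline M}) = \frac{1}{2\pi}\omega_{L^2}$. As Chern forms are additive under tensor products, $c(\overline{\mcr L}_{\overline M}) = c(\lambda(\OPD)^{-1}) + c(\langle\OPD, p^*\overline M\rangle)$. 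The second summand I would evaluate by Proposition~\ref{deligneproductcurvature}(1) as $\int_\Sigma c(\OPD)\wedge c(p^*\overline M)$, after computing $c(\overline M)$: from $c(\overline Q) = \frac{\tau}{4\pi}\omega_\Sigma$ (Definition~\ref{quantizationdata}(ii) together with \eqref{kdefined}) and $c(K_\Sigma^{-1/2}) = \tfrac12\op{Ric}(\omega_\Sigma)$ (the Ricci form being the Chern form of $K_\Sigma^{-1}$ for $g_\Sigma$), one obtains $c(\overline M) = \frac{\tau}{4\pi}\omega_\Sigma + \tfrac12\op{Ric}(\omega_\Sigma)$. Substituting this and comparing with \eqref{expressioncurvaturL2} gives exactly $\frac{1}{2\pi}\omega_{L^2}$, which is the content of the direct statement.

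For the converse, let $\overline{\mcr L}$ be any $J^{j_\Sigma}$-polarised K\"ahler quantization and fix one reference bundle $\overline{\mcr L}_{\overline M_0}$ produced in the direct part. Both carry Chern form $\frac{1}{2\pi}\omega_{L^2}$, so the Hermitian holomorphic line bundle $\overline{\mcr M} := \overline{\mcr L}\otimes\overline{\mcr L}_{\overline M_0}^{-1}$ has vanishing Chern form and therefore flat Chern connection. By Corollary~\ref{Picmap1} we may write $\overline{\mcr M}\cong\langle\OPD, p^*\overline F\rangle$ for a Hermitian holomorphic $\overline F\to\Sigma$, and since $\overline{\mcr M}$ is flat the isomorphism $\Picf(\Sigma)\cong\Picf(S^d\Sigma)$ of Proposition~\ref{Picmap} forces $\overline F\in\Picf(\Sigma)$, i.e.\ $\overline F$ is flat. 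Using bilinearity of the Deligne pairing, $\langle\OPD, p^*\overline M_0\rangle\otimes\langle\OPD, p^*\overline F\rangle \cong \langle\OPD, p^*(\overline M_0\otimes\overline F)\rangle$, I would then conclude $\overline{\mcr L}\cong\overline{\mcr L}_{\overline M_0}\otimes\langle\OPD, p^*\overline F\rangle\cong\overline{\mcr L}_{\overline M}$ with $\overline M := \overline M_0\otimes\overline F$. Finally, writing $\overline M_0 = Q_0\otimes K_\Sigma^{-1/2}$, flatness of $\overline F$ makes $Q_0\otimes\overline F$ another prequantization of $(\Sigma, \tfrac{\tau}{2}\omega_\Sigma)$, so $\overline M$ is again of the kind prescribed in Definition~\ref{quantizationdata}(ii); this completes the converse.

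The routine half is the curvature bookkeeping, where the only care needed is with the $2\pi$-normalisations and with identifying the Ricci form as the Chern form of $K_\Sigma^{-1}$. The genuine weight of the converse rests entirely on Corollary~\ref{Picmap1} (through Proposition~\ref{Picmap} and its appeal to Theorem~19.7 of~\cite{Pol}), which is already established; granting it, the main point to check carefully is that the flat difference bundle really descends from a \emph{flat} $\overline F$ on $\Sigma$, and that tensoring by $\overline F$ keeps $\overline M$ within the admissible class --- in particular that the metaplectic datum $K_\Sigma^{-1/2}$ is left untouched, with the continuous $\Picf(\Sigma)$-ambiguity already subsuming the finite ambiguity of spin structures.
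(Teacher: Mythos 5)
Your proposal is correct and follows essentially the same route as the paper: the direct part is the same curvature computation of $c(\overline M)=\frac{\tau}{4\pi}\omega_\Sigma+\tfrac12\op{Ric}(\omega_\Sigma)$ fed through Proposition~\ref{deligneproductcurvature}(1) and equation~(\ref{expressioncurvaturL2}), and the converse is the same flat-difference argument via Corollary~\ref{Picmap1} and bilinearity of the Deligne pairing, absorbing the flat bundle into the prequantum factor $Q$. Your closing remarks (deriving flatness of $\overline F$ from Proposition~\ref{Picmap} and noting that the $\Picf(\Sigma)$-ambiguity subsumes the finite ambiguity of spin structures) only make explicit what the paper leaves implicit.
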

\begin{proof}
The bundle (\ref{bundleM}) clearly has curvature
\begin{equation}\label{curvatureM}
\frac{1}{2}\left (\op{Ric}(\omega_\Sigma) + \frac{\tau }{2 \pi}\, \omega_\Sigma \right);
\end{equation}
thus, referring to equation (\ref{expressioncurvaturL2}), it follows that $\overline{\mcr L}_{\overline {M}} \rightarrow S^d\Sigma$ does indeed determine a prequantum bundle for $\frac{1}{4\pi^2}\omega_{L^2}$. If $\overline{\mcr L} \rightarrow S^d\Sigma$ denotes any other prequantum bundle to $\omega_{L^2}$, then $\mcr L_{M} \otimes \mcr L^{-1}$ is flat, and hence by Corollary \ref{Picmap1} it is of the form $\langle \OPD, p^\ast \overline{L} \rangle$ for a unique flat line bundle $L\in \Picf(\Sigma)$. But then $\mcr L = \mcr L_{M \otimes L}$, and $M \otimes L$ is also of the form $Q' \otimes K_\Sigma^{-1/2}$ for a prequantum bundle $Q'\rightarrow \Sigma$ of $\left(\Sigma, \frac{\tau}{2}\omega_\Sigma\right)$.
\end{proof}

Since $S^d\Sigma$ is compact when $\Sigma$ is taken compact, we automatically obtain:
\begin{corollary}
A quantum Hilbert space for the prequantization determined by
$\overline{M}\rightarrow \Sigma$  is provided by the finite-dimensional
vector space
\begin{equation} \label{qHilbert}
\mathcal{H}^{\overline{M}}_{P_{J^{j_\Sigma}}}:= H^0\left(S^d\Sigma, \overline{\mcr L}_{\overline{M}}\right),
\end{equation}
equipped with the restriction of the Hermitian metric on $\Gamma\left(S^d\Sigma, \overline{\mcr L}_{\overline{M}}\right)$ determined by the K\"ahler form
$\omega_{L^2}$ on $S^d\Sigma$ and the Hermitian structure on (\ref{prequantumbundle}).
\end{corollary}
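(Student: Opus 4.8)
The plan is to read this statement off directly from the general framework of K\"ahler quantization over a \emph{compact} base, recalled in Section~\ref{sec:geomquant}, combined with Theorem~\ref{thmquantization}. First I would invoke Theorem~\ref{thmquantization} to record that $(\overline{\mcr L}_{\overline M}, \nabla)$ is a $J^{j_\Sigma}$-polarised prequantization of $(S^d\Sigma, \omega_{L^2})$; in particular $\overline{\mcr L}_{\overline M}$ carries the holomorphic structure $\overline\partial_\nabla = \nabla^{(0,1)}$ induced by its Chern connection, so that by the definition of the quantum Hilbert space the relevant object is $\mathcal{H}_{P_{J^{j_\Sigma}}} = H^0_{J^{j_\Sigma}}(S^d\Sigma, \overline{\mcr L}_{\overline M}) \cap \mathcal{H}$, the $J^{j_\Sigma}$-holomorphic, square-integrable wavesections.

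Next I would observe that on a compact base the $L^2$-truncation is vacuous. Since $\Sigma$ is compact, so is the symmetric product $S^d\Sigma = \Sigma^d/\mathfrak S_d$, and hence every smooth section of $\overline{\mcr L}_{\overline M}$ is automatically of finite $L^2$-norm with respect to the inner product built from the Liouville volume $\frac{1}{d!}\,\omega_{L^2}^d$ (here $\dim_\CC S^d\Sigma = d$) and the fibrewise Hermitian metric of (\ref{prequantumbundle}). Therefore $\mathcal H = \Gamma(S^d\Sigma, \overline{\mcr L}_{\overline M})$ and the intersection collapses to $H^0(S^d\Sigma, \overline{\mcr L}_{\overline M})$, which is exactly the special compact-case identity (\ref{HilbKaeh}). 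Finite-dimensionality then follows from the standard fact that the space of global holomorphic sections of a holomorphic line bundle over a compact complex manifold is finite-dimensional (by Hodge theory, or Cartan--Serre finiteness of coherent cohomology), applied to $\overline{\mcr L}_{\overline M} \to S^d\Sigma$. The Hermitian structure on (\ref{qHilbert}) is then simply the $L^2$-pairing just described, restricted from $\Gamma(S^d\Sigma, \overline{\mcr L}_{\overline M})$ to the holomorphic subspace.

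I do not expect a genuine obstacle here: as the phrase ``we automatically obtain'' signals, the corollary is a formal consequence of Theorem~\ref{thmquantization} together with the compactness of $S^d\Sigma$. The only two points meriting even nominal attention are the ones flagged above --- that compactness makes the $L^2$-condition automatic, so that the prequantum Hilbert space equals the full space of smooth sections, and that $H^0$ of a line bundle on a compact complex manifold is finite-dimensional --- and both are entirely standard.
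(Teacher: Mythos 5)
Your argument is exactly the paper's: the corollary is stated there as an automatic consequence of Theorem~\ref{thmquantization} together with the compactness of $S^d\Sigma$, which yields the compact-case identification (\ref{HilbKaeh}) and finite-dimensionality of $H^0$. Your write-up merely makes explicit the standard facts (vacuousness of the $L^2$-condition on a compact base, Cartan--Serre finiteness) that the paper leaves implicit.
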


\subsection{The case of K\"ahler--Einstein metrics on $\Sigma$}

We shall now consider in more detail the particular situation where $\omega_\Sigma$ determines a metric $g_\Sigma$  of constant scalar curvature. In this case, there is a natural quantization of $\mathcal{M}_d$ (for each $d$) which dispenses the input data from Definition \ref{quantizationdata}(ii). Indeed, suppose that the K\"ahler--Einstein equation \begin{equation}\label{oldKEequation}
\op{Ric}(\omega_\Sigma) = - \frac{\tau}{2 \pi} \, \omega_\Sigma
\end{equation}
is satisfied. Then by \eqref{expressioncurvaturL2} the curvature of Quillen's metric on $\lambda(\OPD)^{-1}$ is $\frac{1}{2\pi}\omega_{L^2}$. In this case, integrating \eqref{KEequation} over $\Sigma$ and applying the theorem of Gau\ss--Bonnet, we infer that $\tau \op{Vol}(\Sigma) = 2\pi (2g -2)$. By assumption \eqref{Brad} we must then have $g-1 \geq d > 0$; thus $\Sigma$ is necessarily hyperbolic, and equation \eqref{oldKEequation} can be rewritten as
\begin{equation} \label{KEequation}
\op{Ric}(\omega_\Sigma) = - \frac{2g-2}{{\rm Vol}(\Sigma)} \, \omega_\Sigma.
\end{equation}

Referring back to  Theorem~\ref{thmquantization}, we have established the following:  
\begin{corollary}\label{KE-canonicalquant} Suppose that $g-1>d$ and that the K\"ahler--Einstein equation (\ref{KEequation}) holds on $\Sigma$. Then the determinant of the cohomology with its Quillen metric,  $\lambda(\OPD)^{-1}\rightarrow S^d\Sigma$, canonically provides a  K\"ahler quantization of $(S^d\Sigma, \omega_{L^2})$.
\end{corollary}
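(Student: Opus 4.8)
The plan is to obtain the corollary as a degenerate special case of the curvature computation underlying Theorem~\ref{thmquantization}, exploiting an exact cancellation forced by the K\"ahler--Einstein condition. First I would return to the master formula (\ref{expressioncurvaturL2}), which already isolates $c(\lambda(\OPD)^{-1})$ together with a correction term whose integrand carries the factor $\frac{1}{2}\op{Ric}(\omega_\Sigma) + \frac{\tau}{4\pi}\omega_\Sigma$. Substituting the K\"ahler--Einstein equation (\ref{oldKEequation}) into this factor gives $\frac{1}{2}\left(-\frac{\tau}{2\pi}\omega_\Sigma\right) + \frac{\tau}{4\pi}\omega_\Sigma = 0$, so the wedge product $c(\OPD)\wedge\left(\tfrac12\op{Ric}(\omega_\Sigma)+\tfrac{\tau}{4\pi}\omega_\Sigma\right)$ vanishes identically on the total space of the family. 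Consequently the fibre integral drops out and (\ref{expressioncurvaturL2}) collapses to the single identity $\frac{1}{2\pi}\omega_{L^2} = c(\lambda(\OPD)^{-1})$.

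I would then read this identity as the prequantization condition (\ref{quant}) for the bundle $\lambda(\OPD)^{-1}$ equipped with its Quillen metric, with the prequantum connection $\nabla$ taken to be the Chern connection of that metric. In the framework of Theorem~\ref{thmquantization} this corresponds to dropping the auxiliary factor $\langle \OPD, p^\ast \overline M\rangle$ altogether: the K\"ahler--Einstein condition makes the curvature (\ref{curvatureM}) of $\overline M = Q\otimes K_\Sigma^{-1/2}$ vanish, so that $\langle \OPD, p^\ast\overline M\rangle$ is flat and contributes nothing to the curvature. The point of the hypothesis is that one then needs no prequantization $Q$ of $\frac{\tau}{2}\omega_\Sigma$ and no metaplectic/spin data at all; the bundle $\lambda(\OPD)^{-1}$ and its Quillen metric are produced intrinsically from the family of curves $S^d\Sigma\times\Sigma\to S^d\Sigma$ and from $\omega_\Sigma$, which is what ``canonically'' refers to.

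To upgrade the prequantization to a genuine K\"ahler quantization in the sense of Section~\ref{sec:geomquant}, I would verify polarisation compatibility. By construction $\lambda(\OPD)$ is a holomorphic line bundle for the complex structure $J^{j_\Sigma}$, since the entire determinant-of-cohomology apparatus is applied to the holomorphic family $\pi$ on $(S^d\Sigma\times\Sigma, (J^{j_\Sigma}, j_\Sigma))$; hence its holomorphic structure is exactly the one defining the K\"ahler polarisation $P_{J^{j_\Sigma}}$, and the Chern connection of the Quillen metric is the associated prequantum connection. As $\omega_{L^2}$ is a $(1,1)$-form for $J^{j_\Sigma}$, the curvature identity above is compatible with this polarisation, and the quantum Hilbert space is the finite-dimensional $H^0(S^d\Sigma, \lambda(\OPD)^{-1})$. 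I would also confirm that the hypotheses are consistent and the regime nonempty by integrating (\ref{KEequation}) over $\Sigma$ and invoking Gau\ss--Bonnet to get $\tau\op{Vol}(\Sigma)=2\pi(2g-2)$; combined with the strict Bradlow bound (\ref{Bradlow}) this yields $g-1>d$, exactly the stated hypothesis.

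The argument is short because all the analytic weight has been discharged earlier, in Corollary~\ref{DRRcurvature} (the Deligne--Riemann--Roch curvature formula for Quillen's metric) and in the fibre-integral expression (\ref{L2metricformula}) for $\omega_{L^2}$. The only genuinely delicate point I anticipate is the bookkeeping of the normalising constants linking the Chern form $c$, the curvature $F_\nabla$ and the symplectic form $\omega_{L^2}$, so as to ensure the cancellation delivers the prequantization condition (\ref{quant}) on the nose rather than up to a scalar; once those constants are pinned down, the canonicity claim follows with no further work.
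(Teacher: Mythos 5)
Your proposal is correct and follows essentially the same route as the paper: the paper likewise substitutes the K\"ahler--Einstein equation into (\ref{expressioncurvaturL2}) so that the correction term vanishes and the curvature of Quillen's metric on $\lambda(\OPD)^{-1}$ becomes $\frac{1}{2\pi}\omega_{L^2}$, and then integrates (\ref{KEequation}) and applies Gau\ss--Bonnet together with the Bradlow bound to identify the admissible regime $g-1>d$. There is nothing to add or correct.
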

In fact, the special geometry of the situation allows us to say a little more about this particular choice of metric. We phrase such statements as an informal remark.

\begin{remark}\label{remark:tangentbdle}
There is an exact sequence of line bundles on $S^d\Sigma \times \Sigma$
 $$0 \to \calo \to \mcr P_d \to \calo_{{\mcr D}_d}(\mcr D_d) \to 0,$$
induced by the tautological section $\Phi$ of $\PD = \calo({\mcr D}_d)$.
The determinant of cohomology for the family of curves $\pi \colon S^d\Sigma \times \Sigma \to S^d \Sigma$ is multiplicative on short exact sequences, and so we have an isomorphism
$$\lambda(\mcr P_d) \simeq \lambda(\calo) \otimes \det \pi_* \calo_{{\mcr D}_d}({\mcr D}_d).$$
It turns out that the direct image bundle
$\pi_* \calo_{{\mcr D}_d}({\mcr D}_d)$ is the tangent bundle of $S^d \Sigma$ (see \cite[IV, Lemma 2.3]{arabello}), and it is easy to see that the bundle $\lambda(\calo)$ is trivial. Since $\det {\rm T}{S^d \Sigma} = K_{S^d\Sigma}^{-1}$, we deduce that $\lambda(\mcr P_d)^{-1}$ is the canonical bundle on $S^d\Sigma$, whose first Chern class can be represented by $-\operatorname{Ric}(\omega)$ for any K\"ahler metric $\omega$ on $S^d\Sigma$.  In particular, whenever the K\"ahler--Einstein equation (\ref{KEequation}) is satisfied, we have an equality of cohomology classes $$\left[\operatorname{Ric}\left(\frac{1}{2\pi}\omega_{L^2}\right)\right] = -\left[\frac{1}{2\pi}\omega_{L^2}\right].$$
May it be true that this equation holds at the level of 2-forms? In other words, given the hyperbolic metric on $\Sigma$, is the associated $L^2$-metric on the vortex moduli space $S^d \Sigma$ also of constant scalar curvature? Notice that if $d=1$, then $S^1\Sigma = \Sigma$ and, by uniqueness of the solutions to the K\"ahler--Einstein equation, we would necessarily need  $\omega_{L^2}$ to be the hyperbolic metric itself. Though the corresponding statement does hold for one vortex on the hyperbolic disc~\cite{Str}, the extension to a compact hyperbolic surface is
nontrivial and would even provide a surprising twist to recent 
results on solutions of
the vortex equations in integrable situations (which do require (\ref{KEequation}) as an assumption~\cite{MalMan}, but yield no information about the metrics on the moduli space).
\end{remark}

\section{Holomorphic wavesections as multi-spinors on $\Sigma$} \label{sec:dims}

In this section we shall establish a quantum equivalent, within the geometric quantization framework developed in Section~\ref{sec:qdata}, of the picture (provided by Theorem~\ref{moduliSd}) of the moduli space $\mathcal{M}_d \cong S^d\Sigma$ as a classical phase space of indistinguishable point particles on the surface $\Sigma$. 

Again, we shall work under the assumptions stated in Definition~\ref{quantizationdata}, and comply with the conventions
for $k$, $\overline{K}_\Sigma^{1/2}\rightarrow \Sigma$ and $\overline{Q}\rightarrow \Sigma$ stated there. As before, the genus of $\Sigma$ will be denoted by $g$.
 
\subsection{Main result and its first consequences}

The main result of this section is the following theorem which, in addition to providing a
direct way of computing the dimensions of the quantum Hilbert spaces (\ref{qHilbert}), 
supplies a tie between our quantization scheme and the standard way of describing multiparticle states in elementary nonrelativistic quantum mechanics. 
 
\begin{theorem} \label{thmdims}
There is a natural isomorphism of spaces of holomorphic sections 
\begin{equation}\label{transferral}
H^0(S^d \Sigma, \mcr L_M) \cong \bigwedge{\!}^d\, H^{0}(\Sigma, Q \otimes K_{\Sigma}^{1/2}).
\end{equation}
Moreover, we always have $$\op{dim} H^{0}(\Sigma, Q \otimes K_{\Sigma}^{1/2}) \geq k,$$ with equality (leading to $\dim H^0(S^d\Sigma,{\mcr L}_M)= {k\choose d}$) whenever $k > g-1$.  
\end{theorem}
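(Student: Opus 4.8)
The plan is to pull everything back along the $d!$-sheeted quotient map $q\colon \Sigma^d\to S^d\Sigma$, identify $H^0(S^d\Sigma,\mcr L_M)$ with the $\mathfrak{S}_d$-invariant part of $H^0(\Sigma^d,q^*\mcr L_M)$ (legitimate since $S^d\Sigma$ is smooth), and then compute the latter by hand. The first step is to determine $q^*\mcr L_M$ explicitly. By Remark~\ref{remark:tangentbdle} the factor $\lambda(\PD)^{-1}$ is the canonical bundle $K_{S^d\Sigma}$, and by Proposition~\ref{Picmap} one has $q^*\langle\PD,p^*M\rangle\cong M^{\boxtimes d}$. For the canonical factor I would invoke the ramification formula for the finite map $q$: since $q$ is ramified to first order along the big diagonal $\Delta=\sum_{i<j}\Delta_{ij}$ (modelled locally on elementary symmetric functions, with transposition stabiliser of order two), one gets $q^*K_{S^d\Sigma}\cong K_{\Sigma^d}(-\Delta)$.

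Writing $N:=Q\otimes K_\Sigma^{1/2}$ for the bundle on the right of \eqref{transferral}, so that $M=N\otimes K_\Sigma^{-1}$ and $M^{\boxtimes d}\cong N^{\boxtimes d}\otimes K_{\Sigma^d}^{-1}$, the two canonical contributions cancel and I expect the clean formula
\begin{equation*}
q^*\mcr L_M\;\cong\; N^{\boxtimes d}(-\Delta).
\end{equation*}
This already exposes the mechanism behind the fermionic statistics: it is precisely the determinant-of-cohomology factor $\lambda(\PD)^{-1}=K_{S^d\Sigma}$ that contributes the diagonal twist $\calo(-\Delta)$, while the metaplectic (spin) correction $K_\Sigma^{-1/2}$ hidden in $M$ is exactly what absorbs the product canonical bundle $K_{\Sigma^d}$, leaving $N^{\boxtimes d}$ untouched.

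The heart of the argument, and the step I expect to be most delicate, is tracking the $\mathfrak{S}_d$-equivariant structure through the twist by $\calo(-\Delta)$. The divisor $\Delta$ is $\mathfrak{S}_d$-invariant, but its defining section behaves like the Vandermonde $\prod_{i<j}(z_i-z_j)$ and so transforms under $\sigma\in\mathfrak{S}_d$ by the character $\op{sgn}(\sigma)$. Consequently a section of $q^*\mcr L_M\cong N^{\boxtimes d}(-\Delta)$ is $\mathfrak{S}_d$-invariant exactly when the matching section of $N^{\boxtimes d}$ is sign-equivariant. Under the K\"unneth identification $H^0(\Sigma^d,N^{\boxtimes d})\cong H^0(\Sigma,N)^{\otimes d}$ the sign-equivariant part is $\bigwedge^d H^0(\Sigma,N)$, realised concretely by the Slater-determinant sections $\det\big(s_j(\mathbf z_i)\big)$; these vanish on $\Delta$ automatically, so the divisor condition imposed by the twist is met with nothing to spare. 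This yields the natural isomorphism \eqref{transferral}. The point requiring care is to confirm that the equivariant structure genuinely carries the sign, and not the trivial, character — equivalently, that one lands on $\bigwedge^d$ rather than $\op{Sym}^d$; I would pin this down locally near a generic point of $\Delta_{ij}$ in coordinates adapted to the transposition $(ij)$.

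It remains to compute dimensions by Riemann--Roch on $\Sigma$. Since $\deg Q=k$ and $\deg K_\Sigma^{1/2}=g-1$, the bundle $N$ has degree $k+g-1$ and Euler characteristic $\chi(N)=\deg N-g+1=k$; hence $h^0(N)=k+h^1(N)\ge k$, giving the asserted inequality unconditionally. For the equality, Serre duality identifies $h^1(N)=h^0(K_\Sigma\otimes N^{-1})=h^0(K_\Sigma^{1/2}\otimes Q^{-1})$, a bundle of degree $(g-1)-k<0$ as soon as $k>g-1$, forcing $h^0=0$. Thus $h^0(N)=k$ in that range, and \eqref{transferral} gives $\dim H^0(S^d\Sigma,\mcr L_M)=\dim\bigwedge^d H^0(\Sigma,N)=\binom{k}{d}$, as claimed.
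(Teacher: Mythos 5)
Your argument is correct and reaches the paper's conclusion, but by a genuinely different route. The paper never pulls back to $\Sigma^d$ in degree zero: it applies Serre duality on $S^d\Sigma$ to trade $H^0(S^d\Sigma, K_{S^d\Sigma}\otimes M^{\boxtimes d})$ for $H^d(S^d\Sigma,(M^{-1})^{\boxtimes d})^\vee$, computes the latter via the K\"unneth isomorphism $H^1(\Sigma,M^{-1})^{\otimes d}\cong H^d(\Sigma^d,(M^{-1})^{\boxtimes d})$ and a Hochschild--Serre argument identifying $H^d$ of the quotient with the $\mathfrak{S}_d$-invariants upstairs, and then lets the graded anti-commutativity of the cup product in degree one produce the exterior power for free, before dualizing back with Serre duality on $\Sigma$. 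Your route stays in $H^0$ throughout: you compute $q^*\mcr L_M\cong N^{\boxtimes d}(-\Delta)$ via the Hurwitz/ramification formula and extract invariants directly, at the cost of having to track the equivariant structure through the diagonal twist. You correctly flag this as the delicate step, and it genuinely is one: the sign can equivalently be located in the Vandermonde section of $\calo(\Delta)$, as you do, or in the discrepancy between the permutation-equivariant structure on $(K_\Sigma^{-1})^{\boxtimes d}$ and the canonical $\mathfrak{S}_d$-structure on $K_{\Sigma^d}^{-1}$ (a top exterior power, hence carrying $\op{sgn}$); either bookkeeping lands on the sign character and hence on $\bigwedge^d$ rather than $\op{Sym}^d$, so your local verification near $\Delta_{ij}$ would close the argument. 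What your approach buys is explicitness: the wavesections are realized as Slater determinants $\det\bigl(s_j(\mathbf{z}_i)\bigr)$, making the fermionic statistics visible at the level of sections rather than emerging indirectly from anti-commutativity of cup products; what the paper's approach buys is that the sign is automatic and no analysis near the diagonal is needed. One small remark: the identification $H^0(S^d\Sigma,\mcr L_M)\cong H^0(\Sigma^d,q^*\mcr L_M)^{\mathfrak{S}_d}$ rests on $(q_*\calo_{\Sigma^d})^{\mathfrak{S}_d}=\calo_{S^d\Sigma}$ (i.e.\ on $S^d\Sigma$ being the quotient variety) together with the projection formula, rather than on smoothness per se. The Riemann--Roch and Serre-duality computation of $h^0(\Sigma,Q\otimes K_\Sigma^{1/2})$ is essentially identical to the paper's.
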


\begin{proof}  
We first remark that it follows from Lemma \ref{Picmap} and Remark \ref{remark:tangentbdle} that our prequantum bundle $\mcr L_M \rightarrow S^d\Sigma$ is of the form  $K_{S^d \Sigma} \otimes M^{\boxtimes d}$, for the line bundle $M\to  \Sigma$ in \eqref{bundleM}. By Serre duality,  $H^0(S^d \Sigma, K_{S^d \Sigma} \otimes M^{\boxtimes d}) \cong H^d(S^d \Sigma, (M^{-1})^{\boxtimes d})^\vee$. Now observe that, since the complex dimension of $\Sigma$ is 1, the groups $H^i(\Sigma, {\mcr{F}})$ with $i>1$ vanish for any sheaf $\mcr F$ on $\Sigma$. Then we infer from K\"unneth formula for sheaf cohomology~\cite[p.~114]{KemAV}  that the cup product induces an isomorphism $$ H^1(\Sigma, M)^{\otimes d} \stackrel{\cong}{\longrightarrow} H^d(\Sigma^d, M^{\boxtimes d}).$$
Note that the bundle $M^{\boxtimes d} \rightarrow \Sigma^d$ is $\mathfrak{S}_d$-equivariant. By a Hochschild--Serre spectral sequence argument, the cohomology group  $H^d(S^d \Sigma, (M^{-1})^{\boxtimes d})$  identifies with the $\mathfrak{S}_d$-invariant part of $H^d(\Sigma^d,(M^{-1})^{\boxtimes d})$. Since the cup product is anti-commutative, this $\mathfrak{S}_d$-invariant part is generated by the alternating vectors in $ H^1(\Sigma, M^{-1})^{\otimes d}$, which we can identify with $\bigwedge^d H^1(\Sigma, M^{-1})$. Hence, again by Serre duality, we obtain 
\begin{eqnarray*}
H^0\left(S^d \Sigma, K_{S^d \Sigma} \otimes  M^{\boxtimes d}\right) &\cong& \left(\bigwedge{\!\!}^d \,H^0(\Sigma, K_\Sigma \otimes M)^\vee\right)^\vee \\
&\cong& \bigwedge{\!\!}^d \, H^0(\Sigma, K_\Sigma \otimes M)  .
\end{eqnarray*}
Since $M = Q \otimes K_{\Sigma}^{-1/2}$, we infer that $$H^0(S^d \Sigma, K_{S^d \Sigma} \otimes  M^{\boxtimes d})  \cong \bigwedge^d H^0(\Sigma, Q \otimes K_\Sigma^{1/2}).$$ 
Our choice of isomorphism $(\bigwedge^d(V^\vee)^\vee \cong\bigwedge^d V$ is induced by the pairing given by linearly extending the assignment
$$(e_1 \wedge e_2 \wedge \ldots \wedge e_d) \otimes (f_1 \wedge f_2 \wedge \ldots \wedge f_d) \mapsto \sum_{\sigma \in \mathfrak{S}_d} (-1)^\sigma\prod_{i=1}^d f_i(e_{\sigma (i)}).$$

It follows from Definition~\ref{quantizationdata}  the degree of the bundle $Q \otimes K_\Sigma^{1/2}$ is $k + g-1$. By Riemann--Roch, we then have 
\begin{eqnarray*}
h^0(\Sigma, Q \otimes K_\Sigma^{1/2}) &\geq& h^0(\Sigma, Q \otimes K_\Sigma^{1/2}) - h^1(\Sigma, Q \otimes K_\Sigma^{1/2}) \\
&=& (k+g-1) + (1-g) = k.
\end{eqnarray*}
Whenever $k > g-1$, we have $h^1(\Sigma, Q \otimes K_\Sigma^{1/2}) = 0$, thus in that case we obtain an equality $$h^0(\Sigma, Q \otimes K_\Sigma^{1/2}) = k$$
from this Riemann--Roch argument. 
\end{proof}

\begin{remark}
Recall that $k > d$ by assumption (\ref{Bradlow}). Hence Theorem~\ref{thmdims} implies that there always exist nontrivial wavesections in our K\"ahler quantization problem. 
\end{remark}

\begin{remark}
The spaces of holomorphic sections on either side of (\ref{transferral}) come naturally equipped with Hermitian inner products (induced from $L^2$-metrics on sections and Quillen's construction). Our statement in Theorem~\ref{thmdims} ignores whether the isomorphism \eqref{transferral} is isometric --- and we do not even expect this to be the case. Note that the inner product that is physically significant is the one on the left-hand side, but expressing it in terms of a basis of
the right-hand side is a highly nontrivial task.
\end{remark}

The gist of Theorem~\ref{thmdims} is that there exists a canonical fashion to effectively interpret any wavesection
$$ \psi:S^d \Sigma \rightarrow {{\mcr L}_M}\qquad \text{ for any }M=M(Q,K_\Sigma^{1/2})$$
in our K\"ahler quantization scheme as the sum of  alternating products 
$\psi_1 \wedge\cdots \wedge \psi_d$
of $d$   holomorphic spinors $\psi_i: \Sigma \rightarrow  Q \otimes K_\Sigma^{1/2} $ on the original surface $\Sigma$, taking values in the prequantum bundle $Q\rightarrow \Sigma$ to $(\Sigma,\frac{\tau}{2}\omega_\Sigma)$ used as an ingredient to construct $M$. This result has two immediate consequences:
\begin{itemize}
\item[(1)] It signifies that the quantum multivortex states represented by our quantization scheme have {\em fermionic} character, since any wavesection $\psi$ is multiplied
by the sign $(-1)^{\sigma}$ of a permutation $\sigma \in \mathfrak{S}_d$ acting on the effective (and indistinguishable) one-particle states $\psi_i$
coming from each of the $d$ copies of $H^0(\Sigma,Q\otimes K^{1/2}_\Sigma)$.
\item[(2)] The $Q$-valued $j_\Sigma$-spinors $\psi_i$ are automatically
half-forms on $\Sigma$, so the alternative description of the quantization supplied by the isomorphism (\ref{transferral}) is a multiparticle
{\em half-form quantization} on the surface $\Sigma$. And this may sound somewhat surprising, since we did not start from a half-form quantization scheme
to construct the wavesections $\psi$.
\end{itemize}

Further to 
(2), the next result shows that half-form K\"ahler
quantization does not even apply to moduli spaces of multivortices, unless $g=0$ and $d$ is odd.

\begin{proposition}\label{nohalfforms}
Given a compact Riemann surface $\Sigma$ of genus $g$, the manifold $S^d\Sigma$ admits a metaplectic structure if and only if $d=1$, or if  $g=0$ and $d$ is odd. 
\end{proposition}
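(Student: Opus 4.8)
The plan is to reduce the statement to a characteristic-class computation and then evaluate the relevant class modulo $2$. In the Kähler setting a metaplectic structure on $S^d\Sigma$ is exactly a holomorphic square root $K_{S^d\Sigma}^{1/2}$ of the canonical bundle (Section~\ref{sec:metaplectic}), so I would begin by recalling the standard fact that such a square root exists if and only if $w_2(S^d\Sigma)=c_1(S^d\Sigma)\bmod 2$ vanishes in $H^2(S^d\Sigma;\ZTWO)$. The point that makes this clean on $S^d\Sigma$ is that its integral cohomology is torsion-free (MacDonald): then $H^2(S^d\Sigma;\ZTWO)=H^2(S^d\Sigma;\ZZ)\otimes\ZTWO$, and since $\mathrm{Pic}^0$ is $2$-divisible on a Kähler manifold, the three conditions ``$w_2=0$'', ``$c_1$ is $2$-divisible in $H^2(S^d\Sigma;\ZZ)$'', and ``$K_{S^d\Sigma}$ admits a holomorphic square root'' all coincide. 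Thus a metaplectic structure exists iff $c_1(S^d\Sigma)\equiv 0\pmod 2$.

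The next step is to compute $c_1(S^d\Sigma)$ in terms of the generators $\eta,\theta$ of \eqref{isoXi}. By Remark~\ref{remark:tangentbdle} we have $K_{S^d\Sigma}=\lambda(\OPD)^{-1}$, hence $c_1(S^d\Sigma)=c_1(\lambda(\PD))$, and Corollary~\ref{DRRcurvature} expresses this as $\tfrac12\,p_*\bigl(c_1(\PD)\wedge(c_1(\PD)+c_1(T\Sigma))\bigr)$ with $p_*$ fibre integration over $\Sigma$. To evaluate it I would read off the Künneth components of $\xi:=c_1(\PD)$ by pulling back along $q\times\mathrm{id}$ to $\Sigma^d\times\Sigma$: from the proof of Proposition~\ref{Picmap}, $(q\times\mathrm{id})^*\PD=\calo(\sum_i D_i)$ with $D_i$ the graph of the $i$-th projection, so the Künneth form of the diagonal class of $\Sigma$ gives (with $\phi\in H^2(\Sigma)$ the point class, $\{a_i,b_i\}_{i=1}^g$ a symplectic basis of $H^1(\Sigma;\ZZ)$, and $a_i',b_i'\in H^1(S^d\Sigma)$ their images under $\mathrm{AJ}_d^*$)
\begin{equation*}
\xi=\eta\boxtimes 1+\sum_{i=1}^{g}\bigl(a_i'\boxtimes b_i-b_i'\boxtimes a_i\bigr)+d\,(1\boxtimes\phi).
\end{equation*}
A direct fibre integration, in which every cross term vanishes except those feeding the point class, yields $p_*(\xi^2)=2d\,\eta-2\theta$ and $p_*(\xi\cdot c_1(T\Sigma))=(2-2g)\eta$ (using $c_1(T\Sigma)=(2-2g)\phi$ and $\theta=\sum_i a_i'\cup b_i'$), so that
\begin{equation*}
c_1(S^d\Sigma)=(d+1-g)\,\eta-\theta .
\end{equation*}
As a check, for $d=1$ this recovers $c_1(\Sigma)=(2-2g)\phi$; alternatively this canonical-class formula may simply be quoted from MacDonald.

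Finally I would conclude by cases, the decisive feature being that the coefficient of $\theta$ is $-1$, independent of $d$. The class $\theta=\Xi\!\left(\sum_i a_i\wedge b_i\right)$ sits in the $\bigwedge^2 H_1(\Sigma;\ZZ)$-summand of \eqref{isoXi}, complementary to the $H_0$-summand generated by $\eta$, and its mod-$2$ reduction is the nonzero element $\sum_i a_i\wedge b_i$ of $\bigwedge^2 H_1(\Sigma;\ZTWO)$ exactly when $g\ge 1$. Hence for $g\ge1$ and $d>1$ the reduction $c_1(S^d\Sigma)\equiv(d+1-g)\eta+\theta$ has a nonzero $\theta$-component and cannot vanish, so no metaplectic structure exists. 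For $g=0$ we have $S^d\Sigma=\PP^d$, $\theta=0$ and $c_1=(d+1)\eta$ with $\eta$ the hyperplane class, which is even precisely when $d$ is odd. The case $d=1$ is immediate since $S^1\Sigma\cong\Sigma$ and $c_1(\Sigma)=2-2g$ is always even (equivalently, every compact Riemann surface carries theta characteristics). Assembling the cases gives exactly the stated dichotomy. I expect the main obstacle to be the fibre-integration bookkeeping that pins down the $\theta$-coefficient as odd; once that is in hand, the complementarity of $\eta$ and $\theta$ in \eqref{isoXi} makes the mod-$2$ argument short.
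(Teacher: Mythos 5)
Your proof is correct, and its overall skeleton matches the paper's: reduce the existence of a metaplectic structure to the vanishing of $w_2({\rm T}S^d\Sigma)\equiv c_1({\rm T}S^d\Sigma)\ (\mathrm{mod}\ 2)$, invoke MacDonald's formula $c_1({\rm T}S^d\Sigma)=(d+1-g)\eta-\theta$ (which you rederive by fibre integration via Remark~\ref{remark:tangentbdle} and Corollary~\ref{DRRcurvature} --- a nice consistency check, though the paper simply cites \cite{McD}), and use the torsion-freeness of the cohomology of $S^d\Sigma$ to split the mod-$2$ condition into an $\eta$-component and a $\theta$-component under \eqref{isoXi}. Where you genuinely diverge is in the decisive step, showing that $\theta\not\equiv 0\ (\mathrm{mod}\ 2)$ when $g\ge 1$ and $d>1$. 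You argue directly: $\theta$ is the image under the isomorphism $\Xi$ of $\sum_{i=1}^{g}a_i\wedge a_{i+g}$, which is a primitive (non-$2$-divisible) element of the free abelian group $\bigwedge^2 H_1(\Sigma;\ZZ)$ since the $a_i\wedge a_j$ with $i<j$ form a basis; as $\Xi$ remains an isomorphism with $\ZTWO$-coefficients, $\theta$ reduces to a nonzero class. The paper instead assumes $\theta=2\theta'$, reduces to the case $d=g$, feeds this into the relation $\theta^g=g!\,({\rm AJ}_d)^*{\rm PD}([{\rm pt}])$ of \eqref{thetag}, and derives the divisibility $2^g\mid g!$, which Legendre's formula rules out for $g\ge 1$. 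Your route is shorter and more elementary --- it uses only the linear-algebraic primitivity of the symplectic form in $\bigwedge^2 H_1$, with no input from the ring structure of $H^*(S^d\Sigma)$ or from number theory --- whereas the paper's argument, while more roundabout, only needs $\theta$ as an abstract class satisfying \eqref{thetag} rather than its explicit preimage under $\Xi$. The remaining cases ($d=1$ via $c_1(\Sigma)=2-2g$, and $g=0$ via $S^d\PP^1\cong\PP^d$ with $c_1=(d+1)\eta$) are handled as in the paper.
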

\begin{proof}
Recall that a manifold $X$ admits a spin structure if and only if the
second Stieffel--Whitney class $w_2({\rm T}X)\in H^2(X;\ZZ/2\ZZ)$ vanishes~\cite{BorHir}. Since $S^d\Sigma$ is a complex manifold,
one has that 
\begin{equation} \label{w2fromc1}
w_2({\rm T}S^d\Sigma) \equiv c_1({\rm T}S^d \Sigma)\,  ({\rm mod}\, 2);
\end{equation}
on the other hand, a result of \cite[p. 334]{McD} is that
\begin{equation} \label{c1SdSigma}
c_1({\rm T}S^d\Sigma) = (d+1-g) \eta - \theta \;\in \; H^2(S^d \Sigma ;\ZZ),
\end{equation}
with $\eta$ and $\theta$ as defined after equation (\ref{omegacoh}). 
The case $d=1$ is classical,  
and to deal with the $d > 1$ case, we start by observing that
the isomorphism (\ref{isoXi}) is also valid by changing coefficients to
$\ZTWO$, since all the groups involved are torsion free (see (12.3) in \cite{McD}). Then we read off from (\ref{c1SdSigma}) that the vanishing of \eqref{w2fromc1} is
equivalent to 
\begin{equation}\label{formercondition}
(d+1-g)\,  \eta \equiv 0 \, ({\rm mod}\, 2)\quad \text{ in }  H_0(\Sigma; \ZTWO),
\end{equation}
together with
\begin{equation} \label{lattercondition}
\theta \equiv 0 \, ({\rm mod}\, 2) \quad \text{ in } \bigwedge{\!}^2\, H_1(\Sigma; \ZTWO).
\end{equation} 
The condition (\ref{formercondition}) is equivalent to $d+1-g$ being even, and we claim that this cannot be true together with (\ref{lattercondition}), unless $g=0$. 

To justify our claim, we first observe that if $\theta=2\theta'$ for some other class \makebox{$\theta'\in H^2(S^d\Sigma;\ZZ)$,} then the corresponding statement would also be true in $S^{d'} \Sigma$ for any $d'>1$; this is
because (\ref{isoXi}) provides an intrinsic description of $H^2(S^d\Sigma;\ZZ)$ in terms of the homology of $\Sigma$, which is independent of $d$. Thus we can complete the argument by reducing it to a fixed $d$, which we
set to be $d=g$ for convenience. Now if we insert $\theta=2\theta'$ in the relation (\ref{thetag}),
we obtain the implication $2^g \, |\, g!$. By Legendre's formula in number theory~\cite[p.~77]{Mol}, this assertion only holds for $g=0$; and if $g=0$, we have $\theta=0$ by definition.
\end{proof}

In addition to the basic consequences (1) and (2) above, the fact that they occur {\em simultaneously} is significant from the point of view of quantum mechanics. It can be interpreted as a manifestation of the ``spin-statistics theorem'' --- a general principle postulating 
that particles associated to fundamental spinors should be assigned fermionic statistics. Rather than imposing it as an independent axiom, in quantum theory one strives for a derivation of this principle as a consistency condition on more foundational assumptions, and our results for quantization of vortices are in this spirit.

\begin{remark}
A fermionic
character to quantized vortices has been inferred also in a semiclassical
approach to the canonical quantisation of an A-twisted version of a supersymmetric extension of the Abelian Higgs model on compact surfaces (see \cite{BokRom} for a discussion). The relation of that work to the interpretation in item (1) above remains obscure. 
\end{remark}

\begin{remark}
The case $g=0$, already considered in \cite{RomQVS}, is very particular, for then one can interpret K\"ahler quantization alternatively as yielding {\em bosonic} quantum effective particles on $\Sigma \cong \PP^1$.
This is essentially because $S^d(\PP^1)\cong \PP^d$, for which
${\rm Pic}(\PP^d)$ is cyclic with generator $\mathcal{O}_{\PP^d}(1)$.
The prequantum line bundle to a K\"ahler form with K\"ahler class
(\ref{omegacoh}) will necessarily be of the form $\mathcal{O}_{\PP^d}(\ell):=\mathcal{O}_{\PP^d}(1)^{\otimes \ell}$
for $\ell= k - d \in \NN$, and
furthermore $H^0(\PP^d,\mathcal{O}_{\PP^d}(1)) \cong S^d(H^0(\PP^1,\mathcal{O}_{\PP^1}(1)))$. This leads to a description
of the quantum Hilbert space in K\"ahler quantization 
$$
H^0(\PP^d,{\mathcal O}_{\PP^d}(\ell)) \cong S^d \left( H^0 (\PP^1, \mathcal{O}_{\PP^1}(\ell))\right).
$$
The alternative between a bosonic and a fermionic interpretation
was already pointed out in \cite[p.~3465]{RomQVS} (where the fermionic alternative required a particular level in Chern--Simons theory, and was disfavoured for this reason). We should emphasize here that the bosonic alternative is no longer avalilable for $g>0$.
\end{remark}

\subsection{Special divisors and change of complex structure}
\label{sec:specialdiv}

Even though the assumption $k>g-1$ simplifies (and stabilises) the calculation  of the dimension of the Hilbert spaces (\ref{qHilbert}), as stated in Theorem~\ref{thmdims}, it is instructive to consider the effect of \begin{equation} \label{specialdiv}
h^1(\Sigma,Q\otimes K_\Sigma^{1/2}) \ne 0,
\end{equation}
i.e. the situation where our $Q$-valued spinors $\psi_i: \Sigma \rightarrow Q  \otimes K_{\Sigma}^{1/2}$ define special divisors on $\Sigma$. This
is the subject of Brill--Noether theory in the classical geometry of curves; see \cite{arabello}. 

Recall that (\ref{Bradlow}) forces $k$ in (\ref{kdefined}) to satisfy
\begin{equation} \label{degreeg}
{\rm deg}(Q\otimes K_\Sigma^{1/2})=k+g-1>g,
\end{equation}
which means that the condition (\ref{specialdiv}) is non-generic among line bundles of degree $k+g-1$ on $\Sigma$. Moreover,
it can only hold for degrees $k + g -1 \le 2g-2$. i.e. for metrics on
$\Sigma$ whose total area is sufficiently small:
\begin{equation} \label{smallenoughV}
{\rm Vol}(\Sigma) \le 4 \pi(g-1).
\end{equation}

The nonvanishing (\ref{specialdiv}) leads to an enhancement of the dimension of the quantum Hilbert space to
$$
\dim \mathcal{H}^{\overline{M}}_{P_{J^{j_\Sigma}}}=
{k + h^1(\Sigma, Q \otimes K_\Sigma^{1/2}) \choose d} > {k \choose d},
$$ 
where the right-hand side corresponds to the generic dimension. 
The simplest example of jumping dimensions occurs when the degree (\ref{degreeg}) is $2g-2$, i.e. the total area in (\ref{smallenoughV}) is
precisely $4\pi(g-1)$. Then there is exactly one line bundle
in ${\rm Pic}^{2-2g}(\Sigma)$ satisfying (\ref{specialdiv}), namely the
canonical bundle (which is attained when $Q=K_\Sigma^{1/2}$), on
which $h^1(\Sigma,Q\otimes K_\Sigma^{1/2})=1$. For lower degrees, i.e.
$1<k<g-2$, the situation is more complicated, and in general the pattern of jumping dimensions is even sensitive to the complex structure $j_\Sigma$
chosen, for a fixed genus $g$.

The occurrence of jumps in the dimensions of the quantum Hilbert spaces heralds the fact that  different choices of $\overline{M}\rightarrow \Sigma$, as in Definition~\ref{quantizationdata}(ii), cannot just correspond to different representations (in the sense of quantum mechanics) of the same quantum system, since the
Hilbert spaces are not isomorphic in general --- not even through non-unitary isomorphisms. At the very best, one could ask whether quantum Hilbert spaces $\mathcal{H}^{\overline{M}}_{P_{J^{j_\Sigma}}}$
corresponding to $M\rightarrow \Sigma$ within different strata in ${\rm Pic}^{k+g-1}(\Sigma)$ (according to the value of $h^1$) could be related through unitary isomorphism. We shall return to this question in the next section.

The fact that the pattern of jumps  may depend on $j_\Sigma$
shows {\em a posteriori\,} that, in the quantization of the moduli space of vortices, the variation of the preferred polarisation on $S^d\Sigma$ (i.e. of the complex structure $J^{j_\Sigma}$ determined by $j_\Sigma$) has a very different flavour to the one that is familiar from K\"ahler quantization of a K\"ahler manifold with symplectic structure independent
of the choice of adapted complex structure.
The K\"ahler quantization of the family of K\"ahler manifolds (\ref{nicerMd}) obtained by variation of $j_\Sigma$ should be treated as separate quantization problems altogether, in which there is a preferred complex structure induced by $j_\Sigma$ that is part of the classical data.

 \section{On relating choices via projectively flat connections} \label{sec:projflat}

For a vector bundle $E$ with connection $\nabla$, we say that $\nabla$ is {\em projectively flat} if its curvature $F_\nabla$ is a 2-form valued in the centre of ${\rm End}(E)$.  In this situation, parallel transport of vectors allows us to identify the fibres of $\bb P(E)$.
In geometric quantization of the moduli space of flat connections, a projectively flat connection over the space parametrising complex polarizations is a useful device  to identify projectivisations of quantum Hilbert spaces
corresponding to different polarizations~\cite{Hit};  for an account of the latest refurbishment of this technique (which goes under the name of {\em Hitchin's connection}), we refer the reader to~\cite{AndRas}.

Our considerations in Section~\ref{sec:dims} have shown that, once a prequantization of $(\Sigma,\frac{\tau}{2}{\omega_\Sigma})$ together with a metaplectic correction, are fixed, we can model the quantum
Hilbert space in
the K\"ahler quantization of (\ref{nicerMd}) by the vector space $\bigwedge^d H^0(\Sigma,Q\otimes K^{1/2}_\Sigma)$. All choices involved are obtained by fixing one single
metaplectic structure on the surface $\Sigma$, and varying the  prequantum bundle $\overline{Q} \in {\rm Pic}^k(\Sigma)$ of $(\Sigma, \frac{\tau}{4\pi} \omega_{\Sigma})$. In Section~\ref{sec:specialdiv} we emphasised that, in general, there may well be jumps in the dimensions of the quantum
Hilbert spaces, which are therefore manifestly not isomorphic. But can one at least identify projectively the Hilbert spaces within strata where the dimensions are kept constant? In what follows,
we want to look at this problem in the simplest situation where there is a single stratum, by assuming that $k > g-1$. Then by Theorem \ref{thmdims}  all the K\"ahler quantizations in our scheme 
glue together to form a vector bundle of rank $k\choose d$ on ${\rm Pic}^{k+g-1}(\Sigma)$, and one may ask whether one can construct a projectively flat connection on this bundle.

To be more precise, let ${\mathcal{P}}_{k+g-1}$ be any Poincar\'e line bundle (of degree $k+g-1$) on the family of curves ${\rm pr}_1 \colon \operatorname{Pic}^{k+g-1}(\Sigma) \times \Sigma \to \operatorname{Pic}^{k+g-1}(\Sigma)$ (see \cite[p.~166]{arabello}).  Then ${\rm pr}_{1 \ast} \mathcal P_{k+g-1} \rightarrow  {\rm Pic}^{k+g-1}(\Sigma)$ is a vector bundle whose fibres are of the form $H^0(Q \otimes K_{\Sigma}^{1/2})$, by Grauert's theorem on direct images of coherent sheaves.  The conclusion is that, for $k>g-1$, there is a vector bundle 
\begin{equation} \label{bundleonjac}
\bigwedge{\!\!}^d\;{\rm pr}_{1*} \mathcal P_{k+g-1} \longrightarrow {\rm Pic}^{k+g-1}(\Sigma)
\end{equation}
whose base parametrises the different choices of quantization data, and whose fibres model the corresponding quantum Hilbert spaces (\ref{qHilbert}). 

In the following theorem we answer negatively, using topological obstructions, the question of existence of a projective flat connection in the vector bundle \eqref{bundleonjac} in nontrivial cases for genus $g > 1$. Note that this issue only arises for $g>0$, since the Picard variety of $\Sigma=\mathbb{P}^1$ in a given degree is a point.

\begin{theorem} \label{noHit}
Suppose that $g>1$, $k> g-1$ and $k \geq d > 0$. Then there exists a projectively flat connection in the vector bundle (\ref{bundleonjac}) if and only if we are in the degenerate situation of dissolved vortices, i.e. when $k = d$.  
\end{theorem}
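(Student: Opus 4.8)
The plan is to treat the existence of a projectively flat connection as a purely topological question and to rule it out by a characteristic-class obstruction whenever $k>d$, while noting that the case $k=d$ is essentially trivial. The underlying general fact is that if a complex vector bundle $E$ of rank $r$ over a compact base carries a connection whose curvature is central, $F_\nabla=\beta\cdot\op{Id}$, then Chern--Weil theory forces, in rational cohomology, $\op{ch}(E)=r\,e^{c_1(E)/r}$; in particular the degree-$4$ component must satisfy $\op{ch}_2(E)=\tfrac{1}{2r}c_1(E)^2$. This is the condition I would test on $E=\bigwedge^d V$, where $V:=\op{pr}_{1*}\mcr P_{k+g-1}$ has rank $k$ (using $k>g-1$, so that $R^1\op{pr}_{1*}=0$ by the degree bound $\deg(Q\otimes K_\Sigma^{1/2})=k+g-1>2g-2$, and $V$ is locally free of the expected rank). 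A useful preliminary remark is that this obstruction $\op{ch}_2(E)-\tfrac{1}{2r}c_1(E)^2$ is \emph{invariant} under tensoring $E$ by a line bundle, which guarantees that the answer does not depend on the choice of Poincar\'e bundle $\mcr P_{k+g-1}$ (different choices alter $V$, hence $\bigwedge^d V$, only by such a twist).

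First I would dispose of the easy direction. When $k=d$ the bundle (\ref{bundleonjac}) is $\bigwedge^k V=\det V$, a \emph{line} bundle; its endomorphism bundle is trivial, so the curvature of any Chern connection is automatically central and the connection is projectively flat. This accounts for the dissolved-vortex case.

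For $k>d$ I would compute the Chern data of $V$ by Grothendieck--Riemann--Roch applied to $\op{pr}_1\colon \op{Pic}^{k+g-1}(\Sigma)\times\Sigma\to\op{Pic}^{k+g-1}(\Sigma)$. Writing $\alpha=\op{pr}_2^*[\op{pt}]$, $\theta$ for the theta class on the base, and $\gamma$ for the $H^1\otimes H^1$ part of $c_1(\mcr P_{k+g-1})=(k+g-1)\alpha+\gamma$, one uses the standard relations $\alpha^2=\alpha\gamma=0$, $\gamma^2=-2\theta\alpha$, $\gamma^3=0$ together with $\op{Td}(T\Sigma)=1+(1-g)\alpha$ and fibre integration to obtain the clean answer $\op{ch}(V)=k-\theta$, equivalently $c(V)=e^{-\theta}$, so that $c_1(V)=-\theta$ and $c_2(V)=\tfrac12\theta^2$. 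Introducing formal Chern roots $x_1,\dots,x_k$ of $V$, which then satisfy $\sum_i x_i=-\theta$ and $\sum_i x_i^m=0$ for $m\ge2$, the Chern roots of $\bigwedge^d V$ are the $\binom{k}{d}$ sums $x_I=\sum_{i\in I}x_i$ over $d$-subsets $I$. A short count (each index lies in $\binom{k-1}{d-1}$ subsets, each pair in $\binom{k-2}{d-2}$) gives $c_1(\bigwedge^d V)=-\binom{k-1}{d-1}\theta$ and, since the $\sum_i x_i^2$ contribution drops out, $\op{ch}_2(\bigwedge^d V)=\binom{k-2}{d-2}\tfrac12\theta^2$, uniformly in $1\le d\le k$ with the convention $\binom{k-2}{-1}=0$.

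It then remains to compare $\op{ch}_2$ with $\tfrac{1}{2r}c_1^2$ for $r=\binom{k}{d}$: the obstruction is $\bigl(\binom{k-2}{d-2}-\binom{k-1}{d-1}^2/\binom{k}{d}\bigr)\tfrac12\theta^2$, and the combinatorial ratio collapses to $\binom{k-2}{d-2}\binom{k}{d}/\binom{k-1}{d-1}^2=\tfrac{k(d-1)}{d(k-1)}$, which equals $1$ exactly when $k=d$. Since $g>1$ we have $\theta^2\neq0$ in $H^4(\op{Pic}^{k+g-1}(\Sigma);\mathbb{Q})$ (because $\theta^g\neq0$ on a $g$-dimensional abelian variety), so for $k>d$ the obstruction is a nonzero class and no projectively flat connection can exist; this completes the contrapositive of the remaining direction. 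The step requiring the most care is the Riemann--Roch computation yielding $\op{ch}(V)=k-\theta$ and the accompanying check that the chosen obstruction is genuinely independent of the normalisation of $\mcr P_{k+g-1}$; once these are secured the remainder is elementary. I would finally emphasise that the hypothesis $g>1$ enters only through $\theta^2\neq0$, which is precisely what fails for $g\le1$, consistent with the remark that the question is vacuous for $\Sigma=\mathbb{P}^1$.
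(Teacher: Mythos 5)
Your proposal is correct and follows essentially the same route as the paper: both reduce the question to the characteristic-class identity forced on a projectively flat bundle (your $\op{ch}_2(E)=\tfrac{1}{2r}c_1(E)^2$ is equivalent to the paper's $(\tilde r-1)c_1(\tilde E)^2=2\tilde r\,c_2(\tilde E)$), compute $c_1=-\theta$, $c_2=\tfrac12\theta^2$ for the Picard bundle, pass to the exterior power via Chern roots, and observe that the resulting combinatorial identity holds exactly when $k=d$, using $\theta^2\neq 0$ for $g>1$. The only real divergence is that you derive $\op{ch}(V)=k-\theta$ by Grothendieck--Riemann--Roch where the paper cites Mattuck's computation together with Poincar\'e's formula $W_2=\tfrac12\Theta^2$; your added observation that the obstruction is invariant under twisting by a line bundle (hence independent of the normalisation of the Poincar\'e bundle) is a worthwhile point the paper leaves implicit.
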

\begin{proof}  Suppose $\tilde E$ is any vector bundle of rank $\tilde r$ admitting a connection
which is projectively flat. Then, using the definition of Chern forms in terms of the curvature matrix, we deduce the relationship of Chern classes
\begin{equation} \label{necessary}
(\tilde r -1)c_1(\tilde E)^2 = 2\tilde r \, c_2(\tilde E).
\end{equation}

The Chern classes of $E := {\rm pr}_{1*}\mathcal{P}_{k+g-1}$ have been computed by Mattuck \cite[Theorem 4]{Mattuck}. For the statement, fix a point in $ \Sigma$ and consider the associated Abel embedding $\Sigma \hookrightarrow \Jac(\Sigma)$.
We denote by $W_{i}$  the Poincar\'e dual to the image of $S^{g-i}\Sigma \to {\rm Pic}^{g-i}(\Sigma) \to {\rm Jac}(\Sigma)$, where we make use of the additive structure on the Jacobian. Then it is  shown in \cite{Mattuck} that we have equalities
$$c_1(E) = -W_1,\quad c_2(E) = W_1^2-W_2$$
in the Chow ring, hence also in de Rham cohomology. Moreover, if 
$\Theta \in H^{2g}({\rm Jac}(\Sigma);\ZZ)$ denotes the theta-class as before, in de Rham cohomology a formula of Poincar\'e (see \cite{GH}, p. 350)
states that  $W_2 = \frac{1}{2!}\Theta^2$. Since $\Theta$ is defined as $W_1$, we deduce that $c_1(E)=-\Theta $ and $c_2(E) =  \frac{1}{2}\Theta^2$.

It is not hard to show, using expressions for Chern classes in terms of Chern roots, that for a general vector bundle $E$ of rank $r$ we have $$c_1(\bigwedge{\!}^d\, E) = {r-1 \choose d-1} c_1(E) $$
and
$$c_2(\bigwedge{\!}^d\, E) = {r-2 \choose d-1} c_2(E) + \frac{ {r -1 \choose d-1 }^2 -  {r -1 \choose d-1 } }{2}c_1(E)^2.$$
Thus the necessary relation (\ref{necessary}) for the bundle $\tilde E=\bigwedge{\!}^d\, E$ of rank $\tilde r={ r \choose d}$, where $r={\rm rk}\, ({\rm pr}_{1*} \mathcal{P}_{k+g-1}) =k$, becomes
\begin{equation}\label{messy}
\left( {k \choose d } -1 \right) {k- 1 \choose d-1}^2 \Theta^2 = {k \choose d} \left( {k-1 \choose d-1}^2 - \frac{d-1}{k-1} {k-1  \choose d-1} \right) \Theta^2.
\end{equation}
This condition is satisfied for $k=d$. Suppose now that $k>d$. Since we are assuming that $g>1$, and $\Theta^g \ne 0$ (see the discussion preceding \eqref{thetag}), $\Theta^2$ cannot be a torsion element, so
(\ref{messy}) simplifies to
$$ {k- 1 \choose d-1}  = \frac{d-1}{k-1}  {k \choose d}.$$
But this is equivalent to $k=d$, which we had already considered.

Hence  $\tilde E = \bigwedge^d E$ satisfies condition (\ref{necessary}) if and only if  $k=d$, in which case this bundle is a line bundle,
and so  any connection in it is (trivially) projectively flat.
\end{proof}

Dissolved vortices (corresponding to the critical value $\tau=\frac{4 \pi d}{{\rm Vol}(\Sigma)}$, see \eqref{Brad}) were discussed in \cite{ManRom}. In this degenerate situation, (\ref{L2norm}) implies that $\phi=0$ and the moduli space, with its $L^2$-geometry, reduces to Jacobian of $\Sigma$ with its usual flat geometry --- irrespective of the degree $d$ --- as suggested by the formula (\ref{omegacoh}). Geometric quantization in this
context (see e.g.\ \cite{Mou_etal}) has a different flavour to the discussion in this paper.

\bibliographystyle{numsty}

\end{document}